\documentclass[10pt,journal,compsoc]{IEEEtran}

\ifCLASSOPTIONcompsoc
  \usepackage[nocompress]{cite}
\else
  \usepackage{cite}
\fi

\usepackage{url}
\usepackage{amsmath, amsthm, amssymb}
\usepackage{balance}
\usepackage{graphicx}
\usepackage{epstopdf}
\usepackage{subcaption}
\usepackage{color}
\usepackage{indentfirst}
\usepackage{caption}
\usepackage{multirow}
\usepackage{array} 
\newcolumntype{L}[1]{>{\raggedright\let\newline\\\arraybackslash\hspace{0pt}}m{#1}} 
\newcolumntype{C}[1]{>{\centering\let\newline\\\arraybackslash\hspace{0pt}}m{#1}} 
\newcolumntype{R}[1]{>{\raggedleft\let\newline\\\arraybackslash\hspace{0pt}}m{#1}}

\usepackage{algorithm}
\usepackage{algpseudocode}
\usepackage{threeparttable}

\newtheorem{Theorem}{Theorem}[section]

\newtheorem{definition}{Definition}[section]

\newtheorem{remark}{Remark}[section]
\newtheorem{Claim}{Claim}[section]

\begin{document}
\title{Optimized Deployment of Autonomous Drones to Improve User Experience in Cellular Networks}
\author{Hailong~Huang,
       Andrey~V. Savkin,
       Ming~Ding,
       Mohamed Ali~Kaafar
\IEEEcompsocitemizethanks{\IEEEcompsocthanksitem{H. Huang is with School of Electrical Engineering and Telecommunications, University of New South Wales, Sydney 2052, Australia and Networks Research Group, DATA61,  CSIRO, Eveleigh, Sydney 2015, Australia. (E-mail: hailong.huang@unsw.edu.au).}
\IEEEcompsocthanksitem{A.V. Savkin is with School of Electrical Engineering and Telecommunications, University of New South Wales, Sydney 2052, Australia. (E-mail:  a.savkin@unsw.edu.au).}
\IEEEcompsocthanksitem{M. Ding is with Networks Research Group, Data61, CSIRO, Eveleigh, Sydney 2015, Australia (E-mail: Ming.Ding@data61.csiro.au).}
\IEEEcompsocthanksitem{M.A. Kaafar is with Networks Research Group, Data61, CSIRO, Eveleigh, Sydney 2015,  Australia (E-mail: Dali.Kaafar@data61.csiro.au).}
}}

\IEEEtitleabstractindextext{
\begin{abstract}
Modern wireless traffic demand pushes Internet Service Providers to develop effective strategies to improve user experience. Since deploying dense Base Stations (BSs) is not cost efficient, an alternative is to deploy autonomous drones to supplement existing BSs. A street graph is adopted to represent the area of interest. The outdoor User Equipments (UEs) to be served locate near streets and the 2D projections of drones are restricted to streets to avoid collision with buildings. We build up a UE density function based on a real dataset, reflecting the traffic in the area. We study four problems: where to deploy single drone to cover maximum UEs; where to deploy $k$ drones cover maximum UEs subject to an inner drone distance constraint; where to deploy $k$ drones cover maximum UEs subject to inner drone distance constraint and drones' battery constraints; and the minimum drones to cover a given percentage of UEs subject to inner drone distance constraint. We prove that the latter three problems are NP-hard and propose greedy algorithms with theoretical analysis. To our best knowledge, this is the first paper to consider the battery constraints for drone deployments. Extensive simulations have been conducted to verify the effectiveness of our approaches.

\end{abstract}
\begin{IEEEkeywords}
Drone deployment, drone base station, quality of service, cellular networks, content delivery, user experience
\end{IEEEkeywords}
}
\maketitle
\IEEEdisplaynontitleabstractindextext
\IEEEpeerreviewmaketitle
\IEEEraisesectionheading{\section{Introduction}\label{introduction}}
\IEEEPARstart{E}{xplosive} demands for mobile data are driving mobile operators to respond to the challenging requirements of higher capacity and improved quality of user experience (QoE) \cite{nakamura2013trends}. Deploying more Base Stations (BSs) is able to meet the increasing traffic demand. This solution, however, may not only result in more cost for equipments and site rental, but also bring with other issues, such as a high percentage of BSs having low utility in non-peak hours. In this context, the utilization of autonomous drones, which work as flying BSs, could be a more efficient solution than network densification. 

Recently, the US government has approved a resolution to increase commercial use of drones\footnote{https://www.voanews.com/a/trump-ok-test-program-expand-domestic-drone-flights/4085752.html}. Arguably, drones will play a more significantly important role in our daily life in the future. In this paper, we focus on one of the key issues of content delivery using drones: drone deployment. Generally, the drone deployment problem has been studied to find out the optimal 3D positions for drones to serve user equipments (UEs) on a 2D plane. The altitudes of drones are restricted by local regulations. For example, in the US, the maximum allowable altitude is 120 meters above the ground \cite{H_usa}. In Australia, the drones should be more than 30 meters away from people \cite{H_australia}. 
Here, we consider a scenario where drones are flying at a fixed altitude within the allowed range. The drone deployment problem is a bit similar to the problem of optimal sensor placement in control \cite{savkin2001problem, savkin2002hybrid}. The difference is that in control the optimal sensor placement is done in the time domain, while the drone deployment considered in this paper is in the spatial domain.

Different from most existing work where UEs are assumed to be randomly distributed or following a predefined distribution in a 2D environment \cite{ding2016performance}, here we consider a more realistic scenario. We focus on urban environment, and only consider outdoor UEs. We propose a street graph and the UEs to be served by drones are near streets. This assumption is reasonable in urban environment as when users enter buildings, they can switch to Wi-Fi to access the network. Furthermore, instead of focusing on modelling the movements of UEs, we propose a UE density function model. Such model reflects the traffic demand at a certain position on the street graph during a certain time period, which can be obtained either by off-line learning the history traffic or by on-line crowdsensing \cite{crowdsensing11}. In this paper, we build up the UE density functions based on a realistic dataset collected from a social discovery mobile App: Momo\footnote{http://www.immomo.com}. We assume that the drones are wirelessly connected to the existing BSs \cite{backhaul2012} via high frequency radios, which do not interfere with the low frequency radios used by drones and UEs.

Commercial drones often rely on batteries to power their rotors and the on-board electronic modules, such as sensors and radios \cite{gupta2016survey}. Hence, the flying time allowed by the battery is limited. In this paper, we introduce the idea that the drones can recharge their battery from the existing powerlines\footnote{http://www.sbs.com.au/news/article/2017/08/23/powerlines-charge-drones-vic-students}. For safety, we assume that the drones can stop on the utility poles (instead of on powerlines) and recharge themselves. Then, the routine of a drone is serving UEs, flying to a utility pole, recharging on the utility pole, and then flying back to its serving position. According to the field experiments using Phantom drones \cite{azade2017}, the power for flying is over 140 watts; while the typical power for transmitting information through radios is usually around 250 milliwatts \cite{3GPP}, which is three orders of magnitude less than the former case. Compared to flying, the energy consumption caused by wireless transmission is neglected. To guarantee a certain time for serving UEs, the positions of drones should be well managed such that they are able to get recharged before running out of battery, since such positions impact the time spent on flying.

In this paper, we study four problems about drone deployment from simple to difficult:
\begin{itemize}
\item Single Drone Deployment (SDD): where to place a single drone in the area of interest to maximize the effectively served UE number;
\item $k$ Drones Deployment (kDD): given $k$ available drones, where to deploy them such that the effectively served UE number is maximized;
\item Energy aware $k$ Drones Deployment (EkDD): given $k$ available drones, the energy consumption and recharging models, where to deploy them such that the effectively served UE number is maximized;
\item Minimum Drones Deployment (MinDD): what is the minimum number of drones and where to deploy them such that a preferred UE coverage level can be achieved, subject to the inner drone distance constraint.
\end{itemize}

Clearly, the kDD problem is the general case of SDD by extending the number of drones to $k$ from 1, and the EkDD problem is the general case of kDD by taking into account the flying time of drones, which has not been considered in the context of using drones to serve UEs so far in the existing work. The minimum number of drones to achieve a certain user coverage level is a problem in which the Internet Service Providers (ISPs) are interested, which helps ISPs to consider the trade-off between the investment and benefit. Since we assume there a sufficient number of drones available, when drones are nearly out of battery, battery fresh drones would be deployed to replace them. Thus, the flying time constraint is not considered in MinDD.

We address SDD by finding the maximum coverage street point over the street graph, which is relatively easy. For kDD and EkDD, we prove that they can be reduced to the well known max $k$-cover problem, which is NP-hard \cite{feige1998threshold}. For MinDD, we prove that it can be reduced to the set cover problem, which is also NP-hard \cite{feige1998threshold}. Thus, we develop three greedy algorithms to solve kDD, EkDD and MinDD respectively.


To the best of our knowledge, this is the first work to study drone deployment constrained to street graph and with the consideration of battery lifetime constraint. We summarize our contributions as follows:
\begin{itemize}
\item We propose a street graph model describing the urban area of interest; and a UE density function reflecting the traffic demand at a certain position on street graph during a certain period of time.
\item We formulate a series of drone deployment problems and prove that they are NP-hard.
\item We design greedy algorithms to solve the proposed problems and provide theoretical analysis on the approximation factors.
\item Extensive simulations are conducted to verify the effectiveness of the proposed approaches.
\end{itemize}


The rest content is organized as follows. Section \ref{review} briefly reviews the related work. In Section \ref{model}, we present the system model, and in Section \ref{statement}, we formally formulate the problems. Section \ref{solution} presents the proposed solutions, which is followed by our theoretical analysis in Section \ref{analysis}. In Section \ref{simulation}, we conduct extensive simulations to evaluate the proposed approaches based on the realistic network dataset of Momo. Finally, Section \ref{conclusion} concludes the paper and discusses future work.
\section{Related Work}\label{review}
The use of autonomous drones in cellular networks has opened new paradigms in the area of networking. With the development of 5G network, drones will play a crucial role as a continuous network support \cite{sharma2017intelligent}.
 
There is a growing number of publications on the topic of drones in cellular network. Some publications study the fundamental wireless communication between drones and UEs. In \cite{al2014modeling}, air-to-ground pathloss is modelled. It shows that there are two main propagation situations: UEs have Line-of-Sight (LoS) with drones and UEs have non-Line-of-Sight (NLoS) with drones due to reflections and diffractions. A closed form expression for the probability of LoS between drones and UEs is developed in \cite{al2014optimal}. The authors of \cite{mozaffari2015drone} investigate the maximum coverage by two drones in the presence and absence of interference between them.

Other papers consider some interesting applications. In \cite{yang2017proactive}, a proactive drone deployment framework to alleviate the workload of existing ground BSs is proposed. It involves traffic models for three typical social activities: stadium; parade and gathering. Also, it presents a traffic prediction scheme based on the models. Further, it discusses an operation control method to evenly deploy the drones. 
\cite{becvar2017performance} studies the scenario where a drone serves users moving along a street. It compares with approaches of using static BSs and it shows that the drone introduces a significant gain in channel quality and outperforms that using dense BSs in terms of throughput.
\cite{chen2017caching} proposes a proactive deployment of cache-enabled drones to improve the quality of experience at users. The drones cache some popular content based on a prediction model. Such cache is able to reduce the data packet transmission delay. 
\cite{bor2016efficient} considers the problem of deploying one drone in a 3D environment to maximize the number of the served UEs. 
 Exploring the relationship between vertical and horizontal dimensions, \cite{alzenad20173d} solves the problem in \cite{bor2016efficient} by turning the problem into a circle placement problem. 
 In \cite{fotouhi2017dynamic}, the authors also treat the one-drone scenario and look into the energy efficiency issue for the downlink with consideration of both the mechanical energy consumption and the communication energy consumption. 
 \cite{sharma2016uav} studies the problem of deploying multiple drones. It develops an approach of mapping the drones to high traffic demand areas via a neural-based cost function. The authors of \cite{kalantari2016number} consider the issue of the number of drones and where to deploy them in 3D environment; and design a PSO based heuristic algorithm. 

In the mentioned work, the distribution of UEs is either random or following a predefined pattern over a 2D plane. However, such assumption may be far away from reality. In contrast, this paper focuses on the practical scenario, i.e., drones only serve outdoor UEs. We propose a street graph model and the UEs to be served are all near the streets. More importantly, the distribution of UEs is based on a real dataset Momo. Many existing approaches fail to consider the validity of drones' positions. We take into account this issue by restricting the 2D projections of drones to streets, which help avoid collision with tall buildings. None of the existing work has considered the lifetime of drones. The state of arts drones can only work for about half hours\footnote{https://www.dji.com}; while providing service to UEs should be a continuous operation. This paper novelly takes into account such issue. Next, we will present the system model and then formulate our drone deployment problems.
\section{System Model}\label{model}
We consider an urban area with UEs that cannot be served by the existing BSs due to capacity limitation or some malfunction of the infrastructure. We deploy drones to serve these UEs. The drones are assumed to be wirelessly connected to existing BSs via high frequency radios and there is a central station which manages the drones. 

In this section, we present the system model including a proposed street graph associated with the UE density function, and the wireless communication model. 

We first introduce the street graph model. Let ${\cal G} (V,E,\rho)$ be the street graph of the considered area, where $V$ and ${E}$ are the sets of street points and edges between two neighbour street points respectively. Each street point is associated with a UE density function $\rho_{v,t}$, $v\in V$, describing the number of UEs at $v$ during the $t^{th}$ ($t\in [1,T]$) time slot (the total time window we consider is $T$ time slots). Since $\rho$ is a function of time, the street graph $\cal G$ also varies with time. To simplify the statement, we omit the symbol $t$ in the below descriptions.

\begin{definition}\label{graph_distance}
The \textbf{graph distance} between two street points $v$ and $w$ on the graph, i.e., $g(v,w)$, is defined as the length of the shortest path between $v$ and $w$.  
\end{definition}

In the example shown in Fig. \ref{graph}, $g(A,B)$ is the length of the path represented by the red line segments; while $g(A,C)$ is the length of the path represented by the blue line segment. 

\begin{figure}[t]
\begin{center}
{\includegraphics[width=0.4\textwidth]{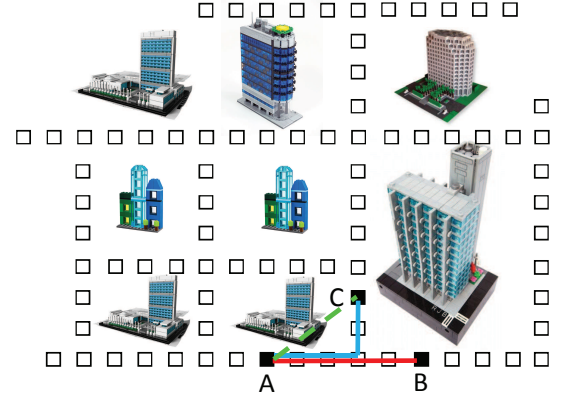}}
\caption{A street graph. The squares are street points. The length of the path represented by the red line segments is the graph distance between A and B; and that between A and C is the blue line.}\label{graph}
\end{center}
\end{figure}

\begin{remark}\label{remark1}
The physical ground distance between two street points on the street graph is always no larger than the corresponding graph distance, according to the triangle inequality theorem, which states that the sum of the lengths of two sides of a triangle must always be greater than the length of the third side.
\end{remark}

\begin{definition}\label{graph_distance}
Let $h$ be the altitude of the drone. The \textbf{spatial graph distance} between the drone at $(v,h)$ and a point $w$ on the graph, i.e., $d(v,h,w)$, is defined as:
\begin{equation}\label{physical_space_distance}
d(v,h,w)=\sqrt{g(v,w)^2+h^2}
\end{equation} 
\end{definition}

\begin{remark}\label{remark2}
According to Remark \ref{remark1}, $d(v,h,w)$ is always no smaller than the corresponding physical spatial distance.
\end{remark}

In this paper, we do not focus on optimizing the altitude of the drones\footnote{This may involve dealing with specific regulations and policies.}. For simplicity, we fix $h$ in this paper and then the $h$ in the spatial graph distance can be omitted without confusion. To avoid hitting tall buildings, we assume the drones can only fly and hover head over the streets.

Now, we consider the wireless communication model between drones and UEs, i.e., the drones-to-UEs links rather than the BSs-to-drones wireless backhaul links. The drones may have LoS or NLoS with UEs depending on the terrain. Consider a drone hovering over street point $v$ and a UE at street point $w$. Adopting the realistic 3GPP propagation model \cite{3GPP}, we have the path loss of signal from the drone to the UE: 
\begin{equation}\label{pathloss_simple}
\begin{aligned}
PL^{\lambda}(d(v,w))=A^{\lambda}+B^{\lambda}\log(d(v,w))\\
\end{aligned}
\end{equation}
where $\lambda \in \{LoS, NLoS\}$. 

The received power by the UE is computed by:
\begin{equation}\label{received_power}
S(v,w)=\frac{P_{tx}}{PL^{\lambda}(d(v,w))}
\end{equation}
where $P_{tx}$ is the transmit power of the drone. 

Then, the signal to noise ratio (SNR) is:
\begin{equation}\label{SNR}
SNR(v,w)=\frac{S(v,w)}{N_0}
\end{equation}
where $N_0$ is the power of noise. 

To meet the Quality of Service (QoS) requirement, we require that the SNR at any UEs should not be lower than a given threshold $\alpha$:
\begin{equation}\label{SNR_alpha}
SNR(v,w) \geq \alpha
\end{equation} 
otherwise, the received signal cannot be demodulated by the normal UEs.

From (\ref{pathloss_simple}), (\ref{received_power}) and (\ref{SNR}), we find that whether (\ref{SNR_alpha}) can be satisfied depends on $d(v,w)$ ($g(v,w)$, as $h$ is fixed in this paper). Moreover, we find that the graph distance is upper bounded, beyond which (\ref{SNR_alpha}) will never be met. Let $g_{max}$ denote such upper bound.
\begin{remark}\label{remark3}
As will be shown later, we use the NLoS situation to compute $g_{max}$, which is the worst case. Then, according to Remark \ref{remark2}, if a drone is placed at $v$, the UEs within graph distance $g_{max}$ on the street graph always satisfy (\ref{SNR_alpha}), i.e., can be served by the drone.
\end{remark}

When multiple drones are used, the signal to interference and noise ratio (SINR) at a UE is computed by:
\begin{equation}\label{SINR}
SINR(v,w)=\frac{S(v,w)}{I+N_0}
\end{equation}
where $I$ is the total interference from all the other drones.
\section{Problem Statement}\label{statement}
In this section, we formulate four drone deployment problems.
\subsection{Single Drone Deployment (SDD)}
We start from the simplest problem: Single Drone Deployment (SDD) i.e., where to deploy a single drone in the area of interest such that the effectively served UE number by the drone is maximized.

Before formulating the problem, we introduce the below definitions.

\begin{definition}\label{covered_set}
The \textbf{covered street point set} is defined as the set of street points, which are covered by the drone. If the drone is deployed at $v$, the covered street point set can be computed by: $S(v)=\{w|g(w,v)\leq g_{max}\}$. Correspondingly, we define the \textbf{covered UE set} as the set of UEs which can be served by the drone if it is placed at $v$, i.e., $U(v)$.
\end{definition}

\begin{remark}\label{remark4}
Let $\cal U$ denote the set of all UEs on the street graph, then we have ${\cal U}=\bigcup_{v\in V} U(v)$.
\end{remark}

\begin{definition}\label{benefit}
The \textbf{benefit} achieved by placing a drone at $v$ is defined as the number of UEs it can serve, i.e., $B(v)=\sum_{w\in S(v)} \rho_w$.
\end{definition}

\begin{remark}\label{remark5}
From Definition \ref{covered_set}, \ref{benefit}, we have $|U(v)|=B(v)$, $\forall\ v\in V$.
\end{remark}

We introduce a binary variable $z(v)$ indicating whether the drone is deployed at $v$:
\begin{equation}\label{z}
z(v)=
\begin{cases}
\begin{aligned}
&1, \text{if the drone is deployed at}\ v\\
&0, \text{otherwise}
\end{aligned}
\end{cases}
\end{equation}

With Definition \ref{covered_set}, \ref{benefit} and (\ref{z}), Single Drone Deployment (SDD) can be formulated as follows:
\begin{equation}\label{optimization1}
\max_{z(v)} \left( \sum_{w\in S(v)} \rho_w \right) z(v)
\end{equation}
subject to 
\begin{equation}\label{constraint1}
S(v)=\{w|g(w,v)\leq g_{max}\}
\end{equation}

\subsection{$k$ Drones Deployment (kDD)}
Now we consider the second problem: $k$ Drones Deployment (kDD), i.e., where to deploy $k$ drones such that the effectively served UE number is maximized subject to the minimum distance between any to drones. Another constraint, i.e., the battery constraint, will be considered in Section \ref{P4}.

In the scenario with multiple drones, a UE may hear signals from more than one of them, which causes interference. In this paper, we require that the minimum distance between any two drones is $\beta$. Note that, the system parameter $\beta$ controls the intensity of interference and a lower $\beta$ may lead to higher interference.

We formulate the $k$ Drones Deployment (kDD) problem as follows:
\begin{equation}\label{optimization2}
\max_{z(v)} \sum_{v\in V}\left( \sum_{w\in S(v)} \rho_w \right) z(v)
\end{equation}
subject to
\begin{equation}\label{constraint2_1}
S(v)=\{w|g(w,v)\leq g_{max}\}
\end{equation}
\begin{equation}\label{constraint2_2}
\sum_{v\in V} z(v)=k
\end{equation}
\begin{equation}\label{constraint2_3}
g(v,w)>\beta,\ \forall\ z(v)=1, z(w)=1, v\neq w
\end{equation}

As seen in the objective function (\ref{optimization2}), we want to maximize the benefits of drone deployment. 
In comparison to the SDD problem, here (\ref{constraint2_3}) constrains the inner distance between any pair of drones.

\subsection{Energy aware $k$ Drones Deployment (EkDD)}\label{P4}
The third problem is named as Energy aware $k$ Drones Deployment (EkDD), which aims at deploying $k$ drones such that the effectively served UE number is maximized, subject to the inner drone distance constraint and the drone battery lifetime constraint. 

The off-the-shelf drones often rely on the preloaded battery to power their rotors and the on-board electronics models. Considering the battery lifetime constraint, we require an efficient duty management scheme for the available drones to guarantee their service. 

Typically, there are three states of a drone: Serve (S), Fly (F) and Recharge (R). Denote $\lambda_S$, $\lambda_F$, and $\lambda_R$ as the percentages of a time slot during which a drone is in S, F, and R respectively. It is apparent that, $\lambda_S+\lambda_F+\lambda_R=100\%$. Among these three states, we assume that only S contributes to network performance, because when a drone is in F or R, the wireless backhaul links may be unstable or unavailable.

We assume that there are a few street points on the graph, which allow the drones to recharge. In practice, such positions can be placed on the utility poles. Let $V_R \subset V$ be the set of recharging positions.

Let $s$ be the speed of the drones. Given $\lambda_S$, $\lambda_F$, and $\lambda_R$, we can compute a maximum graph distance $g_R$ from the recharging positions, beyond which a drone cannot guarantee its serving time, i.e., $\lambda_S$ percent of the full cycle of time slots (S $\rightarrow$ F $\rightarrow$ R $\rightarrow$ F). For simplicity, we assume that to reach a recharging position $v_R$ from its current serving position $v$, a drone first flies over the street to the position whose projection is $v_R$, it then lands on the utility pole (whose altitude is $h_R$) to recharge. When the recharging process finishes, it ascends to $h$ and then flies to the serving position $v$ following streets. Consequently, we have
\begin{equation}
2(g(v,w)+h-h_R) \leq s\lambda_F\cdot 1,
\end{equation}
from which we can obtain
\begin{equation}
g_R=\frac{1}{2}s\lambda_F+h_R-h.
\end{equation}

Now, we are ready to formulate the problem with the constraint of recharging positions, which is shown below.

\begin{equation}\label{optimization4}
\max_{z(v)} \lambda_S\sum_{v\in V}\left( \sum_{w\in S(v)} \rho_w \right) z(v)
\end{equation}
subject to
\begin{equation}\label{constraint4_1}
S(v)=\{w|g(w,v)\leq g_{max}\}
\end{equation}
\begin{equation}\label{constraint4_2}
\sum_{v\in V} z(v)=k
\end{equation}
\begin{equation}\label{constraint4_3}
g(v,w)>\beta,\ \forall\ z(v)=1, z(w)=1, v\neq w
\end{equation}
\begin{equation}\label{constraint4_4}
g(v,v_R)\leq g_R,\ \forall\ z(v)=1,\ \exists\ v_R \in V_R 
\end{equation}

Here, we still aim at finding the optimal positions for $k$ drones such that the total number of served UEs can be maximized and in the meanwhile any two drones are at least $\beta$ apart from each other and any drone must be within $g_R$ from the nearest recharging position.

\subsection{Minimum Drones Deployment (MinDD)}
The above three problems all consider the scenario when drones are limited. The final problem is to seek the minimum number of drones and their deployments such that an ISP preferred level of coverage of the UEs can be served (say 98\%), which is called Minimum Drones Deployment (MinDD). Since we assume that we have a sufficient number of drones, the battery constraint can be ignored but the inner drone distance constrained is still considered.

MinDD can be formulated as follows:

\begin{equation}\label{optimization3}
\min_{z(v)} \sum_{v\in V} z(v)
\end{equation}
subject to
\begin{equation}\label{constraint3_1}
|\bigcup_{\{v|z(v)=1\}}U(v)|\geq\gamma |{\cal U}|
\end{equation}
\begin{equation}\label{constraint3_2}
g(v,w)>\beta,\ \forall\ z(v)=1, z(w)=1, v\neq w
\end{equation}
where $\gamma$ in (\ref{constraint3_1}) is the ISP preferred level of coverage, i.e., the effectively served UE ratio should not be less than $\gamma$. 


\section{Proposed solutions}\label{solution}
In this section, we present the solutions to the four considered problems.
\subsection{Solution to SDD}
\label{solution_single_case}
It is relatively easy to solve SDD using the maximum coverage street point over the street graph.

To find the solution, we can search the set of $V$. For each $v\in V$, find $S(v)$ and compute $B(v)$. Then, the $v$ having the largest $B(v)$ is the optimal place. Note, the solution to the problem may not be unique due to the discrete nature of the UE number. 

\subsection{Solution to kDD and EkDD}\label{solution_multiple_case}
First, we characterize the kDD and EkDD problems as NP-hard ones in Theorem \ref{theorem1}.

\begin{Theorem}\label{theorem1}
The problems of kDD and EkDD are NP-hard.
\end{Theorem}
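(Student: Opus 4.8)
The plan is to prove NP-hardness by a reduction from the maximum coverage problem (max $k$-cover), which the paper has already cited as NP-hard \cite{feige1998threshold}. Recall that in max $k$-cover we are given a universe of elements, a collection of subsets of that universe, and an integer $k$, and we must select at most $k$ subsets so as to maximize the number of distinct elements covered. Since kDD is the special case of EkDD in which the battery/recharging constraint (\ref{constraint4_4}) is vacuous (e.g. $g_R$ is larger than the diameter of $\cal G$, or $V_R=V$), it suffices to show that the more general EkDD is NP-hard and that kDD, as its unconstrained-recharging specialization, inherits hardness; in fact I would argue both directions by exhibiting that an arbitrary max $k$-cover instance embeds into kDD.

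First I would set up the reduction explicitly. Given a max $k$-cover instance with universe $\{e_1,\dots,e_m\}$ and subsets $\{A_1,\dots,A_n\}$, I construct a street graph $\cal G$ in which each candidate subset $A_j$ corresponds to a potential drone location $v_j\in V$, and each element $e_i$ corresponds to a UE (a street point carrying unit density $\rho=1$). The key is to engineer the geometry so that the covered UE set of location $v_j$, namely $S(v_j)=\{w\mid g(w,v_j)\le g_{max}\}$ from Definition \ref{covered_set}, equals exactly the image of $A_j$, i.e. the UE $e_i$ lies within graph distance $g_{max}$ of $v_j$ if and only if $e_i\in A_j$. I would place the UE-points and drone-candidate points with edge lengths chosen so that in-set pairs sit at distance $\le g_{max}$ and out-of-set pairs sit strictly beyond $g_{max}$; simultaneously I would separate the $n$ drone-candidate points pairwise by more than $\beta$ so that the inner-distance constraint (\ref{constraint2_3}) is automatically satisfied by any choice of $k$ of them, making that constraint non-binding. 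With these choices, the objective (\ref{optimization2}), namely $\sum_{v}\bigl(\sum_{w\in S(v)}\rho_w\bigr)z(v)$ counted without double-counting the overlapping coverage, equals the number of distinct elements covered by the chosen subsets, and constraint (\ref{constraint2_2}) forces exactly $k$ drones. Hence an optimal kDD solution yields an optimal max $k$-cover solution and vice versa, in polynomial time.

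The main obstacle I anticipate is realizability of the target distance pattern inside a genuine street graph while keeping the construction polynomial-size: the incidence structure between elements and subsets is an arbitrary bipartite graph, and I must realize "$e_i\in A_j \Leftrightarrow g(e_i,v_j)\le g_{max}$" for all $mn$ pairs simultaneously using a planar, nonnegatively-weighted graph whose shortest-path metric is consistent (no triangle-inequality violations). The clean way around this is to avoid forcing a planar embedding and instead build an abstract weighted graph: attach each UE-point $e_i$ to every drone-candidate $v_j$ with an incident edge of length $\le g_{max}$ when $e_i\in A_j$ and otherwise ensure the shortest path between them exceeds $g_{max}$ by routing only through long edges, then take the shortest-path closure as the graph distance $g(\cdot,\cdot)$; one must check that adding the short in-set edges does not inadvertently create a short out-of-set path, which is handled by making $g_{max}$ strictly less than twice the shortest in-set edge so that no two-hop detour falls under threshold.

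Finally, I would close the argument by noting membership in NP is immediate (a deployment is a set of $k$ street points, and verifying the distance, recharging, and coverage conditions is polynomial), and that the decision version of EkDD (is there a deployment of $k$ drones serving at least $L$ UEs subject to (\ref{constraint4_3})--(\ref{constraint4_4})?) contains the decision version of kDD, which contains max $k$-cover by the reduction above. Therefore both kDD and EkDD are NP-hard, completing the proof of Theorem \ref{theorem1}.
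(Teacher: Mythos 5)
Your proof is correct in substance and rests on the same key fact as the paper's proof --- the NP-hardness of max $k$-cover \cite{feige1998threshold} --- but it is a genuinely more rigorous execution, and the difference is worth spelling out. The paper's argument simply drops constraints (\ref{constraint2_3}), (\ref{constraint4_3}) and (\ref{constraint4_4}) and declares that what remains ``is'' max $k$-cover; read literally this reduces kDD \emph{to} a known NP-hard problem, which is the logically unhelpful direction, and it never verifies that the hard instances of max $k$-cover can actually arise from a street graph, i.e., that an arbitrary set system $\{A_1,\dots,A_n\}$ is realizable as coverage sets $S(v)=\{w \mid g(w,v)\le g_{max}\}$ of Definition \ref{covered_set}. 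Your proposal supplies exactly this missing content: you reduce \emph{from} max $k$-cover by building a weighted bipartite graph in which in-set pairs are joined by an edge of length at most $g_{max}$, the threshold satisfies $g_{max}<2\ell$ for the shortest such edge $\ell$ so that every multi-hop path exceeds $g_{max}$ (hence coverage coincides with set membership and candidate points are automatically far apart, making (\ref{constraint2_3}) non-binding), and you then lift hardness to EkDD by observing that kDD is its specialization with (\ref{constraint4_4}) vacuous. What your route buys is an actual polynomial-time many-one reduction in the correct direction; what the paper's route buys is brevity, at the cost of leaving the realizability and direction issues implicit.

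One loose end in your construction deserves a sentence: in the paper's formulation every street point is a legal drone position, so a drone may be placed at a UE point $e_i$, where it covers the singleton $\{e_i\}$ (all other points being at graph distance at least $2\ell>g_{max}$). This enlarges the effective set system to $\{A_1,\dots,A_n\}\cup\{\{e_1\},\dots,\{e_m\}\}$, so optima could in principle differ. The fix is a one-line domination argument: assuming without loss of generality that every element lies in at least one set $A_j$, any solution using a singleton $\{e_i\}$ can be replaced, without decreasing the union, by one using a set $A_j\ni e_i$; hence the kDD optimum equals the max $k$-cover optimum and the decision versions coincide. With that patch your argument is complete; also note that your closing remark on NP membership, while true, is not needed for NP-hardness.
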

\begin{proof}
If we ignore constraint (\ref{constraint2_3}) in kDD, and constraint (\ref{constraint4_3}) and (\ref{constraint4_4}) in EkDD, the two problems are reduced to an instance of max $k$-cover problem, i.e., given a universe $\cal U$ and sets $U(v)\in {\cal U}$ ($v\in V$), we are looking for a set ${\cal V}$ consisting of $k$ elements of $v$, such that the union of $U(v)$ has maximum cardinality. As shown in \cite{feige1998threshold}, the max $k$-cover problem is NP-hard. So, both of kDD and EkDD are also NP-hard.
\end{proof}

We propose greedy algorithms to solve kDD and EkDD respectively as shown in Algorithms \ref{algorithm1} and \ref{algorithm2}. The basic idea of Algorithm \ref{algorithm1} follows that in \cite{korte2012combinatorial}, i.e, in each iteration, we pick a street point which 1) adds a maximum benefit to the current benefit. Since we consider the inner drone distance constraint, we have to further check 2) whether the street point is $\beta$ away from all the already picked street points.

Clearly, the set ${\cal V}=\{v|z(v)=1\}$, i.e., the set of street points to deploy drones, is with a cardinality of $|{\cal V}|=k$. 
Let $C$ be the set of UEs which have already been served by drones. Initially, both $\cal V$ and $C$ are empty sets. In each iteration, we first compute the optimal street point $v$ in $V$. Second, we check whether (\ref{constraint2_3}) in kDD is satisfied. If yes, we add the served UEs by $v$ to the set of $C$ and add $v$ to the set of $\cal V$. Otherwise, we remove $v$ from $V$.

The intuition of Algorithm \ref{algorithm2} is similar to Algorithm \ref{algorithm1}. For EkDD,  with regards to $\beta$ constraint, we have another condition to check, i.e., 3) whether the selected street point is within $g_R$ from the nearest recharging position, see Line 7 of Algorithm \ref{algorithm2}. 

\begin{algorithm}[t]
\caption{The greedy algorithm for kDD}\label{algorithm1}
\begin{algorithmic}[1]
\Require $V$, $U$, $\beta$ and $k$
\Ensure $\cal V$
\State $C=\emptyset$.
\State ${\cal V}=\emptyset$.
\State $i=1$.
\While{$i \leq k$}
\State $\max_{v\in \{v|v\in V, v\notin {\cal V}\}} |C\bigcup U(v)|$.
\If {$g(v,w)> \beta$, $\forall\ w\in {\cal V}$}
\State $C=C\bigcup U(v)$.
\State ${\cal V}={\cal V}\bigcup \{v\}$.
\EndIf
\State $V=V\setminus v$.
\State $i=i+1$.
\EndWhile
\end{algorithmic}
\end{algorithm}

\begin{algorithm}[t]
\caption{The greedy algorithm for EkDD}\label{algorithm2}
\begin{algorithmic}[1]
\Require $V$, $U$, $\beta$, $k$, $V_R$ and $g_R$
\Ensure $\cal V$
\State $C=\emptyset$.
\State ${\cal V}=\emptyset$.
\State $i=1$.
\While{$i \leq k$}
\State $\max_{v\in \{v|v\in V, v\notin {\cal V}\}} |C\bigcup U(v)|$.
\If {$g(v,w)> \beta$, $\forall\ w\in {\cal V}$}
\If {$g(v,v_R)\leq g_R$, $\exists\ v_R \in V_R$}
\State $C=C\bigcup U(v)$.
\State ${\cal V}={\cal V}\bigcup \{v\}$.
\EndIf
\EndIf
\State $V=V\setminus v$.
\State $i=i+1$.
\EndWhile
\end{algorithmic}
\end{algorithm}

One interesting point worth mentioning is that not all of these $k$ drones can serve simultaneously if the battery constraint is considered, instead, they need to fly to the recharging positions alternately. Let $n$ be the number of drones which must recharge to make the drones work sustainablely. Then, we have that the amount of consumed energy should be no larger than the recharged energy, i.e.,
\begin{equation}\label{recharge_number}
p(k-n) \leq qn
\end{equation}
where $p$ is the energy consumption per time slot and $q$ is the recharging energy per time slot. Here, we only consider the dominant part of the power usage for resisting the gravity and omit that for other marginal parts of energy consumption such as wind. As a result, flying and hovering of drones consume the same power.

From (\ref{recharge_number}), we can obtain:
\begin{equation}\label{n_k}
n \geq \frac{q}{q+p} k
\end{equation}
Then, the drones can be divided into $\lfloor\frac{q+p}{q}\rfloor$ groups to recharge by turns. 

\subsection{Solution to MinDD}\label{minimum_number}
We characterize the MinDD problem as a NP-hard one in Theorem \ref{theorem2}.
\begin{Theorem}\label{theorem2}
The problem of MinDP is NP-hard.
\end{Theorem}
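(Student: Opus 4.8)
The plan is to prove NP-hardness of MinDD by exhibiting a polynomial-time reduction from the classical set cover problem, which is known to be NP-hard \cite{feige1998threshold}, mirroring the structure of the proof of Theorem \ref{theorem1}. Recall that an instance of set cover consists of a universe $\cal U$ together with a collection of subsets $U(v) \subseteq {\cal U}$, and asks for the smallest subcollection whose union is all of $\cal U$. The objective (\ref{optimization3}) of MinDD is precisely to minimize the number of chosen street points $\sum_{v\in V} z(v)$, and the coverage constraint (\ref{constraint3_1}) with $\gamma = 1$ demands that the union of the corresponding covered UE sets $U(v)$ equals the entire UE set $\cal U$ (invoking Remark \ref{remark4}, which states ${\cal U}=\bigcup_{v\in V} U(v)$, so full coverage is attainable). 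Thus, stripping the inner-distance constraint (\ref{constraint3_2}) and setting $\gamma = 1$ turns MinDD directly into an instance of set cover, where each candidate street point $v$ plays the role of a set $U(v)$ and selecting a drone at $v$ corresponds to choosing that set.

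The key steps I would carry out, in order, are: first, show the mapping is well-formed, i.e., given any set cover instance, construct in polynomial time a street graph ${\cal G}(V,E,\rho)$ whose street points correspond to the sets and whose UE density assignment realizes the prescribed membership $U(v)$; second, argue the correspondence is an equivalence, so that a feasible MinDD solution of size $m$ yields a set cover of size $m$ and vice versa, whence an optimal MinDD solution solves the set cover instance optimally; third, conclude that any polynomial-time algorithm for MinDD would solve set cover in polynomial time, so MinDD is NP-hard. Since the coverage constraint can be made an equality by taking $\gamma=1$, the general MinDD problem contains set cover as a special case, which suffices for hardness.

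The main obstacle, and the step that warrants the most care, is verifying that an arbitrary set cover instance can actually be encoded by a valid street graph together with a UE density function. Unlike the max $k$-cover reduction in Theorem \ref{theorem1}, where the abstract sets $U(v)$ were simply identified with the covered UE sets, here one must be careful that the sets $U(v)$ arising from a street graph have the geometric form dictated by Definition \ref{covered_set}, namely $U(v)$ is determined by the ball $S(v)=\{w \mid g(w,v)\leq g_{max}\}$. A clean way around this is to observe that the reduction only needs MinDD to be at least as hard as set cover, so it suffices to realize \emph{some} family of instances; placing street points far apart (graph distance exceeding $\beta$ and structured so each covered set matches a target subset) lets the inner-distance constraint (\ref{constraint3_2}) be satisfied vacuously while the covered sets reproduce an arbitrary set cover instance. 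I would note that the authors likely intend the same lightweight argument as in Theorem \ref{theorem1}: ignore the distance constraint (\ref{constraint3_2}), identify $U(v)$ with abstract sets, and read off the set cover instance directly, so the rigorous graph-construction subtlety can be acknowledged but need not be belabored.
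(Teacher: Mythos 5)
Your proposal is correct and follows essentially the same route as the paper's own proof: specialize MinDD by taking $\gamma=100\%$ and neutralizing the inner-distance constraint (\ref{constraint3_2}), observe that the resulting problem is exactly set cover with the sets $U(v)$, and invoke the NP-hardness of set cover \cite{feige1998threshold}. If anything, you are more careful than the paper, which states the reduction in the loose "MinDD reduces to set cover" direction and never addresses the point you raise about realizing an arbitrary set system as graph-distance balls on a street graph; your handling of both points is sound and does not change the substance of the argument.
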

\begin{proof}
If we ignore constraint (\ref{constraint3_2}) and set $\gamma=100\%$ in MinDD, the problem is reduced to an instance of set cover problem, i.e., given a universe $\cal U$ and sets $U(v)\in {\cal U}$ ($v\in V$), we are looking for a collection ${\cal V}$ of the minimum number of sets from $U$, whose union is the entire universe $\cal U$. Formally, $\cal V$ is a set cover if $\bigcup_{v\in {\cal V}}U(v)={\cal U}$. We try to minimize $|{\cal V}|$. As shown in \cite{feige1998threshold}, the set cover problem is NP-hard, then the generalized and constrained version of set cover problem, i.e., MinDD, is also NP-hard.
\end{proof}

We design a greedy algorithm for MinDD, which is shown in Algorithm \ref{algorithm3}. The basic idea of Algorithm \ref{algorithm3} is similar to Algorithms \ref{algorithm1}, i.e., in each iteration, it picks the street point which leads to the maximum coverage of UEs and at the same time satisfying the inner drone distance constraint.  Algorithm \ref{algorithm3} will terminate when $\gamma$ percent of all the UEs in $\cal U$ have been covered by the drones. The output $\cal V$ gives the projections of drones and the number of drones is $|\cal V|$.


\begin{algorithm}[t]
\caption{The greedy algorithm for MinDD}\label{algorithm3}
\begin{algorithmic}[1]
\Require $V$, $U$, $\cal U$ and $\beta$
\Ensure ${\cal V}$
\State $C=\emptyset$.
\State ${\cal V}=\emptyset$.
\While{$|C| < \gamma |{\cal U}|$}
\State $\max_{v\in \{v|v\in V, v\notin {\cal V}\}} |C\bigcup U(v)|$.
\If {$g(v,w)> \beta$, $\forall\ w\in {\cal V}$}
\State $C=C\bigcup U(v)$.
\State ${\cal V}={\cal V}\bigcup \{v\}$.
\EndIf
\State $V=V\setminus v$.
\EndWhile
\end{algorithmic}
\end{algorithm}

\section{Analysis of the proposed algorithms} \label{analysis}
This section analyses the proposed greedy algorithms.

We first consider Algorithms \ref{algorithm1} and \ref{algorithm2}.
Let $OPT$ be the number of served UEs by an optimal solution. Let $c_i$ be the number of served UEs by the $i^{th}$ ($i\leq k$) picked street point, and $x_i$ be the number of served UEs by $i$ already picked street points, i.e., $x_i=\sum_{j=1}^i c_j$. Then, the remaining unserved UE number can be computed by $y_i=OPT-x_i$. Further, $x_0=0$, then $y_0=OPT$.

Since the optimal solution uses $k$ sets to serve $OPT$ UEs, we find that: in the $(i+1)^{th}$ step, some street point must be able to serve at least $\frac{1}{k}$ of the remaining uncovered UEs from $OPT$, i.e., 
\begin{equation}\label{ana1}
x_{i+1}\geq \frac{y_i}{k}
\end{equation}
In particular, when $i=0$, we have:
\begin{equation}\label{ana2}
x_1\geq \frac{OPT}{k},
\end{equation}
since $y_0=OPT$.

\begin{Claim}\label{claim1}
$y_{i+1}\leq (1-\frac{1}{k})^{i+1}OPT$.
\end{Claim}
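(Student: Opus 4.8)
The plan is to prove the claim by induction on $i$, converting the two per-step facts (\ref{ana1}) and (\ref{ana2}) into a single multiplicative recursion on the residual quantity $y_i = OPT - x_i$. The engine of the argument is that each greedy pick shrinks this residual by at least a factor $(1-\frac{1}{k})$: if the $(i+1)^{th}$ selected street point contributes $c_{i+1}$ newly served UEs, then $y_{i+1} = y_i - c_{i+1}$, and the greedy rule guarantees $c_{i+1} \geq \frac{y_i}{k}$. This last bound comes from an averaging argument — the optimal solution's $k$ sets together cover all $OPT$ UEs and hence also the $y_i$ of them still uncovered, so at least one of those sets adds at least $\frac{y_i}{k}$ fresh UEs, and the greedy maximizer does no worse. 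Substituting yields the recursion $y_{i+1} \leq (1-\frac{1}{k}) y_i$, and everything else is bookkeeping.

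For the base case $i=0$ I would invoke (\ref{ana2}) directly: since $x_1 \geq \frac{OPT}{k}$, we get $y_1 = OPT - x_1 \leq (1-\frac{1}{k})OPT$, which is exactly the claimed bound at $i=0$. For the inductive step I would assume $y_i \leq (1-\frac{1}{k})^i OPT$ and chain it with the recursion: $y_{i+1} \leq (1-\frac{1}{k}) y_i \leq (1-\frac{1}{k})^{i+1} OPT$. This closes the induction and establishes the claim for all $i$.

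The one point requiring care — and the step I would state most explicitly — is the precise reading of (\ref{ana1}). Taken literally it bounds the cumulative coverage $x_{i+1}$, whereas what actually drives the telescoping is the \emph{incremental} gain of the new pick, namely $c_{i+1} = x_{i+1} - x_i \geq \frac{y_i}{k}$; I would re-derive this increment form from the averaging argument so that $y_{i+1} = y_i - c_{i+1}$ collapses cleanly rather than leaving a stray $x_i$ term. A secondary caveat worth a remark is that the $\beta$-distance feasibility check in Algorithm \ref{algorithm1} could in principle block the averaging-optimal point, so the bound here is understood in the relaxed max-$k$-cover setting of Theorem \ref{theorem1}, and I would flag that the analysis inherits that relaxation. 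With the claim in hand the intended payoff is immediate: setting $i+1=k$ and using $(1-\frac{1}{k})^k \leq \frac{1}{e}$ gives $y_k \leq \frac{OPT}{e}$, i.e. $x_k \geq (1-\frac{1}{e})OPT$, the standard greedy approximation factor.
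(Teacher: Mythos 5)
Your proof is correct and follows essentially the same route as the paper's: induction on $i$, with the base case taken directly from (\ref{ana2}) and the inductive step driven by the averaging argument that some street point must cover at least $\frac{1}{k}$ of the residual $y_i$, giving the recursion $y_{i+1}\leq(1-\frac{1}{k})y_i$. Your one refinement --- reading (\ref{ana1}) as a bound on the increment $c_{i+1}=x_{i+1}-x_i$ rather than on the cumulative coverage $x_{i+1}$ --- is precisely the repair that the paper's own derivation (\ref{ana3}) silently relies on: its first line $x_{i+1}\leq y_i-y_{i+1}$ is only coherent when $x_{i+1}$ there is read as $c_{i+1}$ (since $y_i-y_{i+1}=c_{i+1}$ by definition), so your version is the same argument stated consistently.
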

\begin{proof}
We use the method of Mathematical Induction to prove the claim.

When $i=0$, from (\ref{ana2}) we have
\begin{equation}\label{ana4}
\begin{aligned}
&x_1 \geq \frac{OPT}{k}\\
\Rightarrow &OPT-x_1 \leq OPT-\frac{OPT}{k}\\
\Rightarrow &y_1 \leq (1-\frac{1}{k}) OPT
\end{aligned}
\end{equation}
i.e., the claim is true when $i$ takes 0.

Further, we assume $y_{i}\leq (1-\frac{1}{k})^{i}OPT$ and then we derive $y_{i+1}\leq (1-\frac{1}{k})^{i+1}OPT$ below:
\begin{equation}\label{ana3}
\begin{aligned}
&x_{i+1}\leq y_{i}-y_{i+1}\\
\Rightarrow &y_{i+1}\leq y_{i}-x_{i+1}\\
\Rightarrow &y_{i+1}\leq y_{i}-\frac{y_i}{k}\ (\text{using}\ (\ref{ana1}))\\
\Rightarrow &y_{i+1}\leq (1-\frac{1}{k})^{i+1}OPT.
\end{aligned}
\end{equation}

Therefore, (\ref{ana4}) and (\ref{ana3}) prove Claim \ref{claim1}.
\end{proof}

Now we are in the position to present the main results. 

\begin{Theorem}\label{theorem3}
Algorithm \ref{algorithm1} (Algorithm \ref{algorithm2}) is a $(1-\frac{1}{e})$ approximation of kDD (EkDD).
\end{Theorem}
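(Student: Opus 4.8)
The plan is to read off the approximation ratio directly from Claim \ref{claim1}, which already controls the number of UEs left unserved after each greedy selection. Once that per-step bound is in hand, the theorem is essentially a one-line rearrangement together with a standard elementary inequality, so the bulk of the work has already been done in establishing the claim.

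First I would instantiate Claim \ref{claim1} at the final iteration by setting $i = k-1$, which gives $y_k \leq (1-\frac{1}{k})^k OPT$; that is, after all $k$ street points have been chosen, the number of optimal-solution UEs still left uncovered is at most $(1-\frac{1}{k})^k OPT$. Since by definition $x_k = OPT - y_k$ is exactly the number of UEs served by the greedy output, this rearranges to
\begin{equation*}
x_k = OPT - y_k \geq \left(1 - \left(1-\frac{1}{k}\right)^k\right) OPT.
\end{equation*}
Next I would apply the elementary bound $(1-\frac{1}{k})^k \leq \frac{1}{e}$, valid for every integer $k \geq 1$ (it follows from $1-x \leq e^{-x}$ with $x = 1/k$, raised to the $k$-th power). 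Substituting this into the display yields $x_k \geq (1-\frac{1}{e})OPT$, which is precisely the claimed $(1-\frac{1}{e})$ guarantee for Algorithm \ref{algorithm1} on kDD.

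For Algorithm \ref{algorithm2} on EkDD I would argue that the same chain applies verbatim: the extra recharging test in Line~7 only narrows the set of admissible street points, and the objective (\ref{optimization4}) differs from (\ref{optimization2}) only by the constant factor $\lambda_S$, which cancels when both the greedy value and $OPT$ are scaled by it. Hence no separate computation is needed, and the ratio is inherited.

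The step I expect to require the most care is the justification of inequality (\ref{ana1}) in the constrained setting, rather than the closing algebra. The pigeonhole reasoning behind (\ref{ana1}) — that at step $i+1$ some street point covers at least $\frac{1}{k}$ of the remaining optimal UEs — is immediate for the unconstrained max $k$-cover problem, but here the $\beta$ feasibility test (and, for EkDD, the $g_R$ test) may exclude the very optimal set that attains the pigeonhole bound. The honest way to close this gap is to take $OPT$ to be the optimum of the \emph{constrained} problem and to argue that the $k$ points of this optimum remain individually selectable so that the pigeonhole step survives; I would flag that this is the delicate point, since for a general feasibility structure the clean $(1-\frac{1}{e})$ bound need not persist, and the analysis implicitly leans on treating the problem as the cardinality-constrained max $k$-cover instance identified in Theorem \ref{theorem1}.
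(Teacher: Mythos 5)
Your proposal follows essentially the same route as the paper's proof: instantiate Claim \ref{claim1} at $i=k-1$, apply $(1-\frac{1}{k})^k \leq \frac{1}{e}$, and rearrange $x_k = OPT - y_k \geq (1-\frac{1}{e})OPT$, with the EkDD case handled by the identical chain (the paper, like you, treats the $\lambda_S$ factor and the extra recharging test as changing nothing in the argument). Your closing caveat --- that the pigeonhole inequality (\ref{ana1}) may fail once the $\beta$ (and $g_R$) feasibility tests can exclude points of the constrained optimum --- is a genuine issue that the paper's own proof passes over in silence, so your write-up is if anything the more careful of the two.
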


\begin{proof}
From Claim \ref{claim1}, we have
\begin{equation}
\begin{aligned}
y_{k} &\leq (1-\frac{1}{k})^{k}OPT\\
&\leq \frac{1}{e}OPT
\end{aligned}
\end{equation}
Hence, we have
\begin{equation}
x_k=OPT-y_k\geq (1-\frac{1}{e})OPT
\end{equation}
i.e., the number of served UEs by the $k$ street points picked by Algorithm \ref{algorithm1} (Algorithm \ref{algorithm2}) is at least $(1-\frac{1}{e})$ fraction of the optimal solution $OPT$.
\end{proof}

The above results can also be used in the analysis of Algorithm \ref{algorithm3}. Let $k^*$ denote the optimal solution to MinDD, i.e., $k^*$ is the minimum number of required drones to serve at least $\gamma$ of all the UEs. Let $m=\gamma |{\cal U}|$. Clearly, if we set $k=k^*$ in kDD or EkDD, $OPT=m$. 
\begin{Theorem}\label{theorem4}
Algorithm \ref{algorithm3} is a $(\ln (m)+1)$ approximation of MinDD.
\end{Theorem}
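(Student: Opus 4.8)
The plan is to reuse the per-step progress machinery already established in Claim \ref{claim1} and convert the multiplicative decay of the uncovered-UE count into a bound on the number of greedy iterations. First I would instantiate the setting of Claim \ref{claim1} with $k=k^{*}$: since $k^{*}$ is by definition the smallest number of $\beta$-separated drones whose coverage reaches the target $m=\gamma|{\cal U}|$, the optimal $k^{*}$-drone deployment serves $OPT=m$ UEs, and Algorithm \ref{algorithm3} terminates exactly when its running coverage $x_i=|C|$ first reaches $m$, i.e.\ when $y_i=OPT-x_i$ drops to $0$. Claim \ref{claim1}, read with $k=k^{*}$, then yields after $i$ greedy picks
\begin{equation}
y_i \leq \left(1-\frac{1}{k^{*}}\right)^{i} m .
\end{equation}

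Second, I would turn this into a stopping-time estimate. Using $1-x\leq e^{-x}$ gives $y_i\leq m\,e^{-i/k^{*}}$, so $y_i<1$ as soon as $m\,e^{-i/k^{*}}<1$, that is, as soon as $i>k^{*}\ln m$. Because $y_i$ counts UEs and is therefore a nonnegative integer, $y_i<1$ forces $y_i=0$, meaning the coverage target has been met and the algorithm has halted. The smallest integer exceeding $k^{*}\ln m$ is at most $k^{*}\ln m+1$, and since $k^{*}\geq 1$,
\begin{equation}
k^{*}\ln m+1 \leq k^{*}\ln m+k^{*}=(\ln m+1)\,k^{*}.
\end{equation}
Hence Algorithm \ref{algorithm3} selects at most $(\ln m+1)k^{*}$ street points, establishing the claimed $(\ln m+1)$ approximation factor.

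The step I expect to be the main obstacle is justifying the per-step progress inequality (\ref{ana1}) in the presence of the inner-drone-distance constraint (\ref{constraint3_2}), which Theorem \ref{theorem2} dropped when reducing MinDD to set cover. Inequality (\ref{ana1}) rests on a pigeonhole over the $k^{*}$ optimal sets: at least one of them covers a $1/k^{*}$ fraction of the still-uncovered UEs. To use it inside Algorithm \ref{algorithm3} I must argue that such a heavily-covering optimal street point is still \emph{eligible} at that iteration, namely not yet discarded and at graph distance more than $\beta$ from every point already in $\cal V$, so that the greedy pick can actually match its gain. Since the optimal points are themselves pairwise $\beta$-separated but greedy's earlier choices may sit within $\beta$ of some of them, this eligibility is not automatic; reconciling it (or, as Claim \ref{claim1} implicitly does, carrying out the recursion on the $\beta$-relaxed instance and noting that relaxing (\ref{constraint3_2}) only enlarges the feasible set and hence cannot lower the optimal coverage $OPT=m$) is the delicate part. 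The remaining integrality and arithmetic steps are routine.
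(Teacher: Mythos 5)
Your proposal is correct and takes essentially the same route as the paper's proof: both instantiate Claim \ref{claim1} with $k=k^{*}$ and $OPT=m$ and turn the geometric decay of the uncovered count into the bound $k^{*}(\ln m+1)$, differing only in the endgame --- the paper stops the decay after $k^{*}\ln(m/k^{*})$ picks, leaving at most $k^{*}$ uncovered UEs which it then covers with one drone apiece, whereas you run the decay down to $y_i<1$ and invoke integrality, so your count is $k^{*}\ln m+1$ versus the paper's $k^{*}\ln(m/k^{*})+k^{*}$, both of which are at most $k^{*}(\ln m+1)$. The $\beta$-eligibility issue you flag at the end is real, but it is equally unaddressed in the paper: Claim \ref{claim1} and the paper's proof of Theorem \ref{theorem4} apply the pigeonhole inequality (\ref{ana1}) to the constrained greedy Algorithm \ref{algorithm3} without arguing that a high-gain optimal street point remains admissible after earlier greedy picks, so your argument matches the paper's level of rigor and is merely more candid about that gap.
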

\begin{proof}
From Claim \ref{claim1}, we have $y_i\leq (1-\frac{1}{k^*})^i m$. After picking $k^* \ln (\frac{m}{k^*})$ street points, the remaining unserved UE number is:
\begin{equation}
\begin{aligned}
y_i&\leq (1-\frac{1}{k^*})^{k^* \ln (\frac{m}{k^*})} m\\
&\leq \frac{1}{e}^{\ln (\frac{n}{k^*})} m \\
&=k^*\\
\end{aligned}
\end{equation}

The worst case of serving those remaining unserved UEs (at most $k^*$) is to use at most $k^*$ drones, i.e., each drone serves one UE. Then, we have the minimum number of drones obtained by Algorithm \ref{algorithm3} is
\begin{equation}
\begin{aligned}
k^* \ln (\frac{m}{k^*})+k^* &\leq k^* (\ln (\frac{m}{k^*})+1)\\
&\leq k^* (\ln (m)+1)\\
\end{aligned}
\end{equation}

Therefore, the minimum number of drones to serve at least $\gamma$ of all the UEs obtained by Algorithm \ref{algorithm3} is a $(\ln(m)+1)$ approximation of the optimal solution.
\end{proof}

This section analyses the approximation factors for the proposed greedy algorithms. In the next section, we will evaluation these algorithms through extensive simulations.
\section{Evaluation}\label{simulation}
In this section, we evaluate our proposed drone deployment strategies based on a collected dataset from a mobile App (Momo). We first introduce the parameters used in the simulation (Section \ref{setup}). Then, we provide details on the realistic dataset in use (Section \ref{dataset}). Next, we show the metrics for evaluation (Section \ref{metric}), followed by the compared approaches (Section \ref{compared_approach}). We present the extensive simulation results are shown (Section \ref{results}). 
\subsection{Simulation Setup}\label{setup}
Table \ref{table_parameter} provides a quick reference for the used parameters in the simulation.

According to the QoS constraint, we can determine $g_{max}$ from (\ref{physical_space_distance}), (\ref{pathloss_simple}), (\ref{received_power}), (\ref{SNR}) and (\ref{SNR_alpha}). We vary $\alpha$ from 10 to 20 $dB$ and calculate the corresponding $g_{max}$ for both LoS and NLoS cases, as shown in Fig. \ref{G_alpha}. We can find that under the same $\alpha$, the LoS case has a larger $g_{max}$ than of NLoS case. To determine the value of $g_{max}$, we further show the probability of LoS under varying distance between transmitter and receiver in Fig. \ref{p_los}, using the model obtained by realistic experiments in \cite{3GPP}. From Fig. \ref{p_los}, we can see that when the distance between the transmitter and receiver is greater than 100 meters, the probability of LoS is less than 20\%. Thereby, we use the NLoS case to compute $g_{max}$. We select $\alpha=15$ $dB$ and the corresponding $g_{max}$ becomes 95 $m$.

\begin{table}[t]
\begin{center}
\caption{Parameter configuration}\label{table_parameter} 
  \begin{tabular}{| l | l | l |}
    \hline
    Notation & Value & Description\\\hline
    $A^{LoS}$ & 103.8 &\multirow{2}{*}{LoS path loss parameters}\\\cline{1-2}
    $B^{LoS}$ & 20.9 & \\\hline
    $A^{NLoS}$ & 145.4&\multirow{2}{*}{NLoS path loss parameters}\\\cline{1-2}
    $B^{NLoS}$ & 37.5&\\\hline
    $P_{tx}$ & 20 $dBm$ & Drone transmission power\\\hline
    $N_0$ & -104 $dBm$& Noise power\\\hline
    $h$ & 50 $m$& Drone altitude\\\hline
    $\alpha$ & 15 $dB$& SNR threshold\\\hline
    $\beta$ & 0 - $3g_{max}$ & Inner drone distance\\\hline
    $\gamma$ & 90\% - 98\% & Level of UE coverage \\\hline
    $s$ & 4 - 8$m/s$ & Flying speed of drones\\\hline
    $W$ & 100 $MHz$ & Bandwidth\\\hline
    $W_{max}$ & 2 $MHz$ & Upper bound of bandwidth at a UE\\\hline
    $h_R$ & 10 $m$ & Height of utility pole\\\hline
	$\lambda_S$ & 45\% & Percentage of one time slot for serving\\\hline
	$\lambda_F$& 5\% & Percentage of one time slot for flying\\\hline
	$\lambda_R$& 50\% & Percentage of one time slot for recharging\\\hline
    \end{tabular}
\end{center}
\end{table}

\begin{figure}[t]
    \centering
    \begin{subfigure}[b]{0.35\textwidth}
        \includegraphics[width=\textwidth]{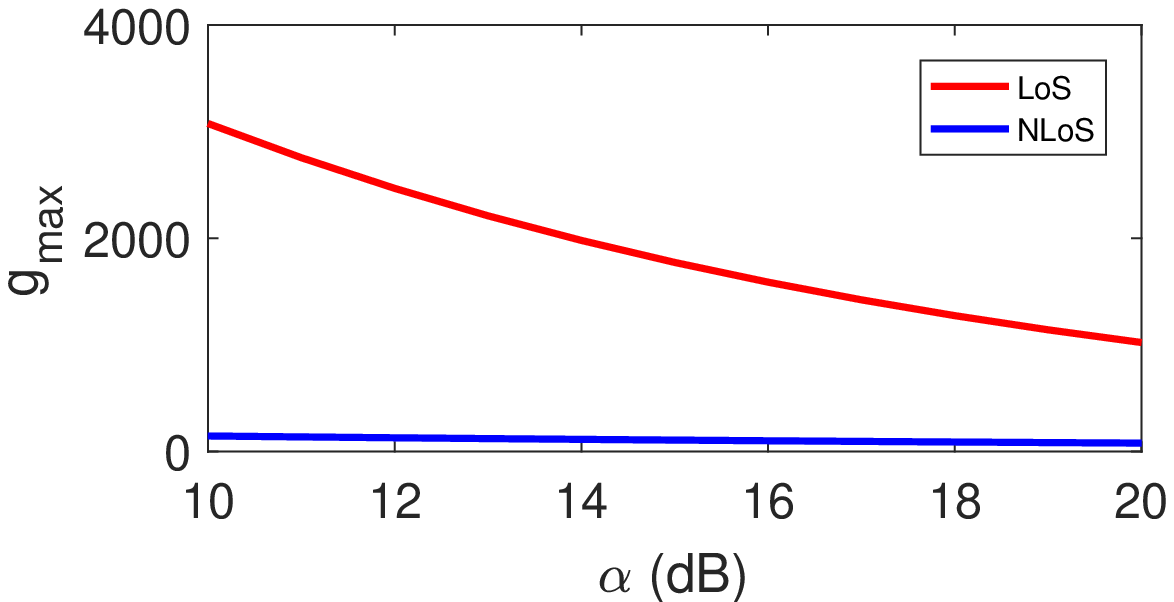}
        \caption{}
        \label{G_alpha}
    \end{subfigure}
    \begin{subfigure}[b]{0.35\textwidth}
        \includegraphics[width=\textwidth]{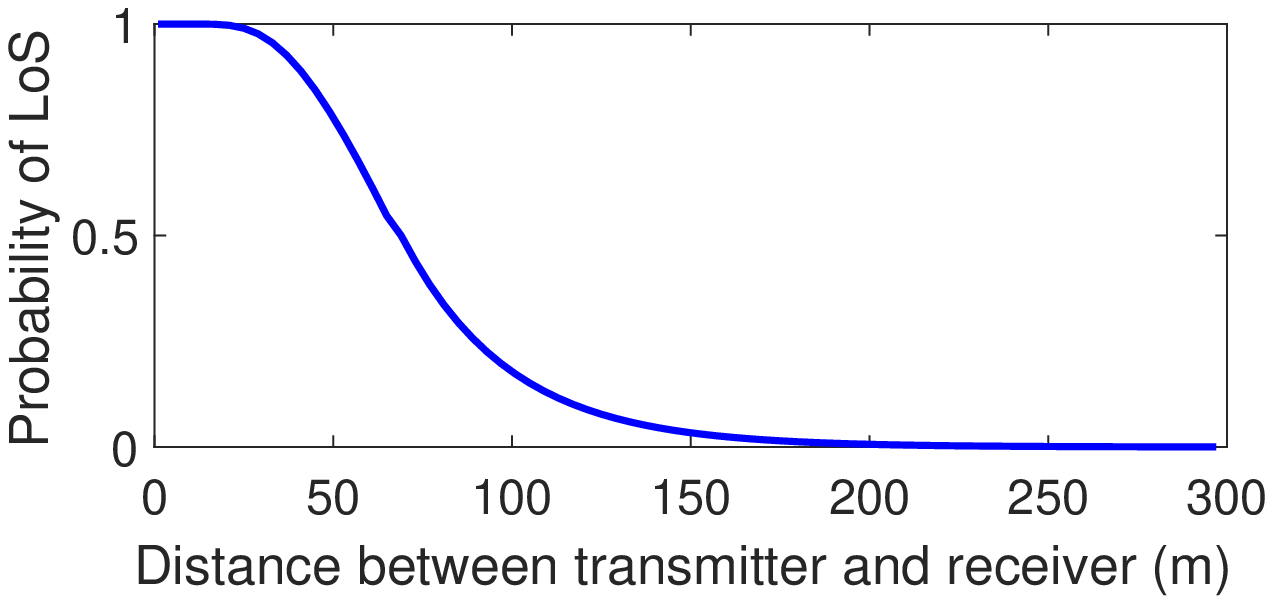}
        \caption{}
        \label{p_los}
    \end{subfigure}
    \begin{subfigure}[b]{0.35\textwidth}
        \includegraphics[width=\textwidth]{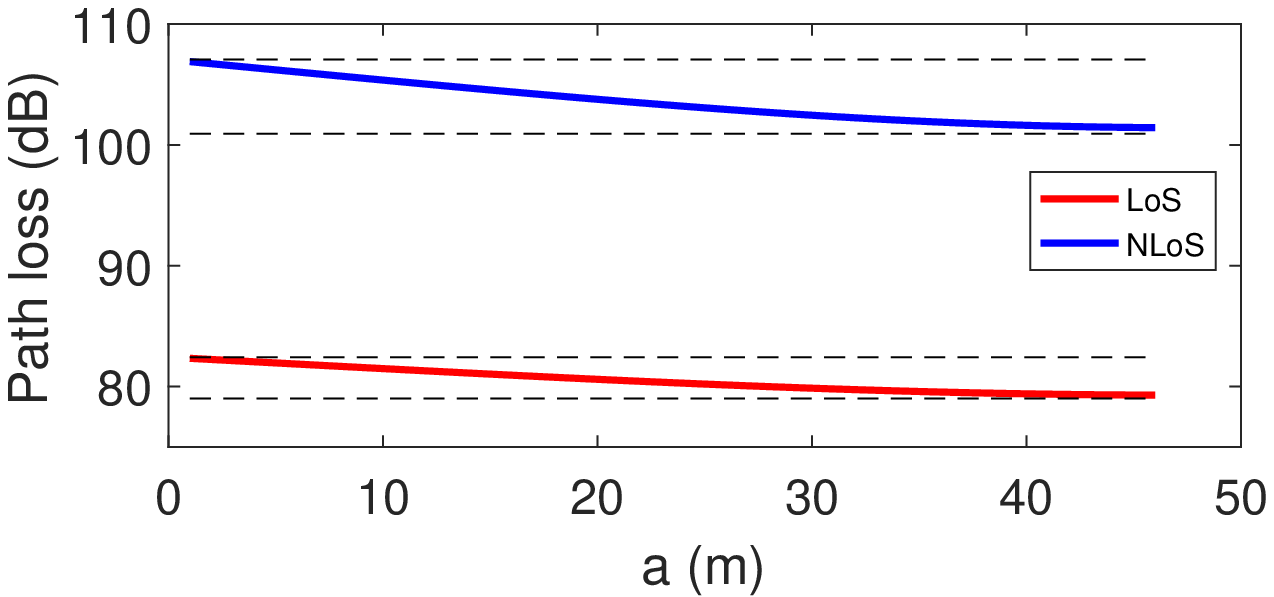}
        \caption{}
        \label{p_los_bound}
    \end{subfigure}
    \caption{(a) $g_{max}$ vs $\alpha$. (b) Probability of LoS vs distance. (c) Path loss.}
\end{figure}

%

Now we discuss how accurate of using graph distance to calculate path loss, instead of the physical distance. Consider two street segments $a$ and $b$ and they are perpendicular. The graph distance between the two end points is $a+b$ while the physical distance is $\sqrt{a^2+b^2}$. Let $g_{max}=a+b$ and we compute the path loss corresponding to the physical distance. Fig. \ref{p_los_bound} shows the path loss of both LoS and NLoS when $a$ is between 1 and $g_{max}/2$ meters. The physical distance varies with $a$, while the graph distance is fixed to $g_{max}$. Obviously, the largest gap occurs when $a=g_{max}/2$, and the corresponding path loss of LoS is only about 3 $dB$ lower, and that of NLoS is only about 6 $dB$ lower than the value calculated by the graph distance $g_{max}$, both of which incur small errors in practice. 
In short, approximating the physical distance by the graph distance has a small impact on the calculation of path loss. 


\subsection{Dataset}\label{dataset}
To get the UE density function, we make use of the dataset of Momo. When a Momo user has an update, the information of his ID, timestamp, latitude and longitude is sent to the server. The Momo dataset contains approximately 150 million such updates in a period of 38 days, from 21/5/2012 to 27/6/2012 \cite{chen2013and}. We extract a subset of this dataset, based on which we build up the UE density function. 

The Momo dataset consists of the updates of world-wide users. To make the selected dataset suitable to our problem, we only focus on the updates by the users in a small residential community in Beijing, China. The latitude of this area is from 39.9176N to 39.9242N and the longitude is from 116.4406E to 116.4501E, which is about 1059 $\times$ 721 $m^2$, as shown in Fig. \ref{map}. Further, we build up a discrete street graph according to the area map as shown in Fig. \ref{user_distribution_521}. Each street is represented by a number of discrete points. 
From the whole dataset, we select the updates whose locations fall into the considered latitude and longitude range. Based on our observation, these updates belong to indoor and outdoor UEs. We remove the indoor updates by selecting those falling into the neighbourhoods of the street points. Then, the selected dataset only consists of the updates belonging to the UEs near the streets. We demonstrate the distribution of such UEs on 21/5/2012 in Fig. \ref{user_distribution_521}.

\begin{figure}[t]
    \centering
    \begin{subfigure}[b]{0.47\textwidth}
        \includegraphics[width=\textwidth]{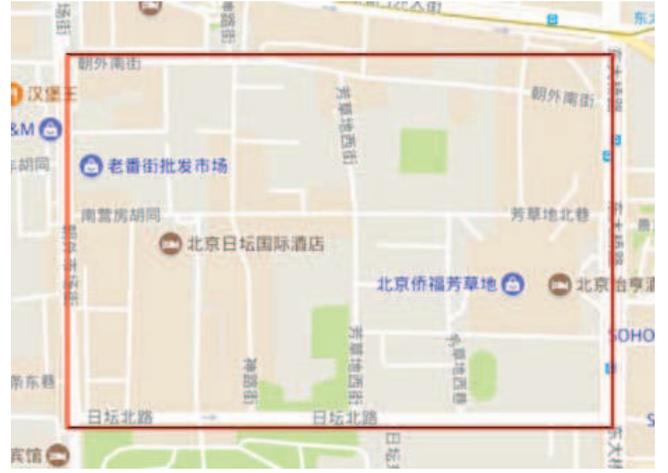}
        \caption{}
        \label{map}
    \end{subfigure}
    \begin{subfigure}[b]{0.53\textwidth}
        \includegraphics[width=\textwidth]{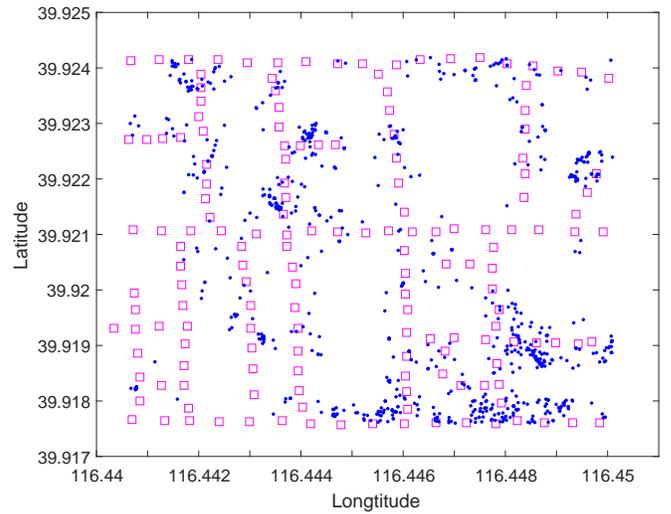}
        \caption{}
        \label{user_distribution_521}
    \end{subfigure}
    \caption{(a) The considered residential community in Beijing. (b) UE distribution on the street graph on 21/5/2012.}
\end{figure}


The selected subset of data covers more than five weeks including three types of UE pattens on weekdays, weekends, and public holidays. Since there are only three days belonging to category of public holidays, in this paper we only consider the UE pattens on weekdays and weekends. We respectively take the average numbers of UEs for weekdays and weekends and show them in Fig. \ref{UEnumber}, where the duration of a time slot is one hour. 
It can be seen that on weekdays the average UE number is mostly larger than weekends. The average total UE number on weekdays is 1003 per day while that on weekends is 727. On weekdays, the UE number increases steadily from 8:00 and arrives at the peak (about 70 UEs) at 16:00, after which it decreases. The UE variation for weekends is different: it increases slowly from 8:00 to 16:00, and from 16:00 to 23:00 the UE number remains at around 40. After 23:00, much later than weekdays, it drops down. If we set 40 as a UE number threshold to decide the usage of drones: on weekdays we need to send drones to serve UEs from 10:00 to 22:59, i.e., 13 hours; while on weekends the drones should work for 8 hours between 15:00 and 22:59. We need to mention that the number of UEs provided by Fig. \ref{UEnumber} shows only the UEs using Momo. The actual number of UEs should also include those not using Momo. Although the dataset we have cannot provide us the true total number of UEs, it presents the realistic traffic pattens and UE distributions, which is much more closer to reality than random distribution. To make the results more sensible, we introduce a scalar to scale up the UE number according to \cite{scale15}, which is set as 5 in the simulations.


\begin{figure}[t]
\begin{center}
{\includegraphics[width=0.53\textwidth]{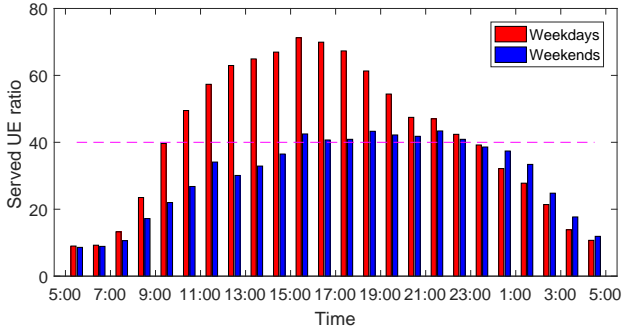}}
\caption{Average UE numbers using Momo on weekdays and weekends in the considered area.}\label{UEnumber}
\end{center}
\end{figure}

\subsection{Metrics}\label{metric}
In the evaluations, we consider the performance metrics:
\begin{itemize} 
\item Served UE ratio: The served UE ratio is defined as $|\bigcup_{v\in {\cal V}}U(v)|/|{\cal U}|\times 100\%$. 
\item Spectral efficiency ($SE$): a measure of how efficiently a limited frequency spectrum is utilized by PHY layer protocol. The spectral efficiency at a UE is computed by $SE=log_2(1+SINR)$, where SINR is defined in (\ref{SINR}). In this paper, we consider the average spectral efficiency:
\begin{equation}
ASE=\frac{1}{M} \sum_{i=1}^M log_2(1+SINR_i)
\end{equation} 
\item Number of drones: the number of drones required to serve at least $\gamma$ percent of UEs. 
\item Served UE per drone: the average number of UEs served by one drone. 
\item Network capacity ($NC$): the number of bits that can be generated in an unit area of 1 $km^2$, given the bandwidth $W$. The network capacity is a metric measuring the overall performance in both PHY and MAC layers, whose unit is $Mbps/km^2$. For simplicity, we assume that the bandwidth of a drone is evenly allocated to its served UEs and there is a limit on the bandwidth that each UE can get, i.e., $W_{max}$. As a result, $W_{i}=\min(\frac{W}{M_j}, W_{max})$, where $M_j$ is the number of UEs served by drone $j$. Suppose there are $k$ drones in 1 $km^2$, then the network capacity can be computed by:
\begin{equation}
NC=\sum_{j=1}^k \sum_{i=1}^{M_j} SE_i\cdot W_i
\end{equation}
\end{itemize}

\subsection{Comparing Approaches}\label{compared_approach}
As mentioned in Sections \ref{introduction} and \ref{review}, there are several related work on the topic of drone deployment problem. For the scenario of placing a single drone, the most relevant work is  \cite{alzenad20173d}. Aiming at serving the maximum UEs, the authors formulate a mixed integer non-linear problem to find the optimal position of the drone. Further, the authors formulate a second order cone problem to shorten the cover range of the drone to save transmission energy. 
For a fair comparison, we only compare our work with the results of the mixed integer non-linear problem (MINLP), since shortening a bit transmission range does not contribute much to the total energy consumption by the drone. We acknowledge that the energy consumed for drone movement is the dominant factor.

There are also some related work about placing multiple drones, such as \cite{sharma2016uav} and \cite{kalantari2016number}. However, both of them divide the area of interest into a set of zones or subareas, which are quite different from our basic model, i.e., the street graph. So we demonstrate the comparison of our approach with \cite{alzenad20173d} for the case of single drone deployment; and the comparison with max $k$-cover (without inner drone distance constraint) for the case of multiple drone deployment. 

%
%
%

\subsection{Simulation results}\label{results}
We present simulation results to evaluate the performance of the proposed solutions respectively in this part.
\subsubsection{Evaluation of SDD}

We first show the performance of single drone deployment method. As discussed in Section \ref{dataset}, on weekdays we deploy a drone from 10:00 to 22:59 for 13 hours; while on weekends the drone work for 8 hours between 15:00 and 22:59. Except \cite{alzenad20173d} (named as Approach 1), we also compare with a random deployment (named as Random). 

We respectively display the served UE ratios on weekdays and weekends in Fig. \ref{average_UE_weekdays_5} and \ref{average_UE_weekends}. We can see that our proposed method achieves similar performance compared with \cite{alzenad20173d} in terms of served UE ratio. In theory, the 2D projection of a drone is constrained on street in our approach; while there is no such limitation in \cite{alzenad20173d}. So the optimum solution by \cite{alzenad20173d} achieves no worse performance in served UE ratio than ours. However, \cite{alzenad20173d} uses MOSEK solver to address the MINLP and some solutions are local optimums, which are not competitive to our solutions. Further, both of them outperform the random deployment. 

It is worthy of mentioning that, as there is no limitation on the 2D projection of the drone, over 70\% of the solutions by \cite{alzenad20173d} locate off the streets (see the illustrative example shown in Fig. \ref{deployment_example}, although the deployment by \cite{alzenad20173d} serves more UEs, the projection is off street), which have high probability of hitting tall buildings if applied in urban environment directly. In contract, our solutions are on streets and can be applied directly to realistic networks, thanks to the street graph model.

\begin{figure}[t]
    \centering
    \begin{subfigure}[b]{0.47\textwidth}
        \includegraphics[width=\textwidth]{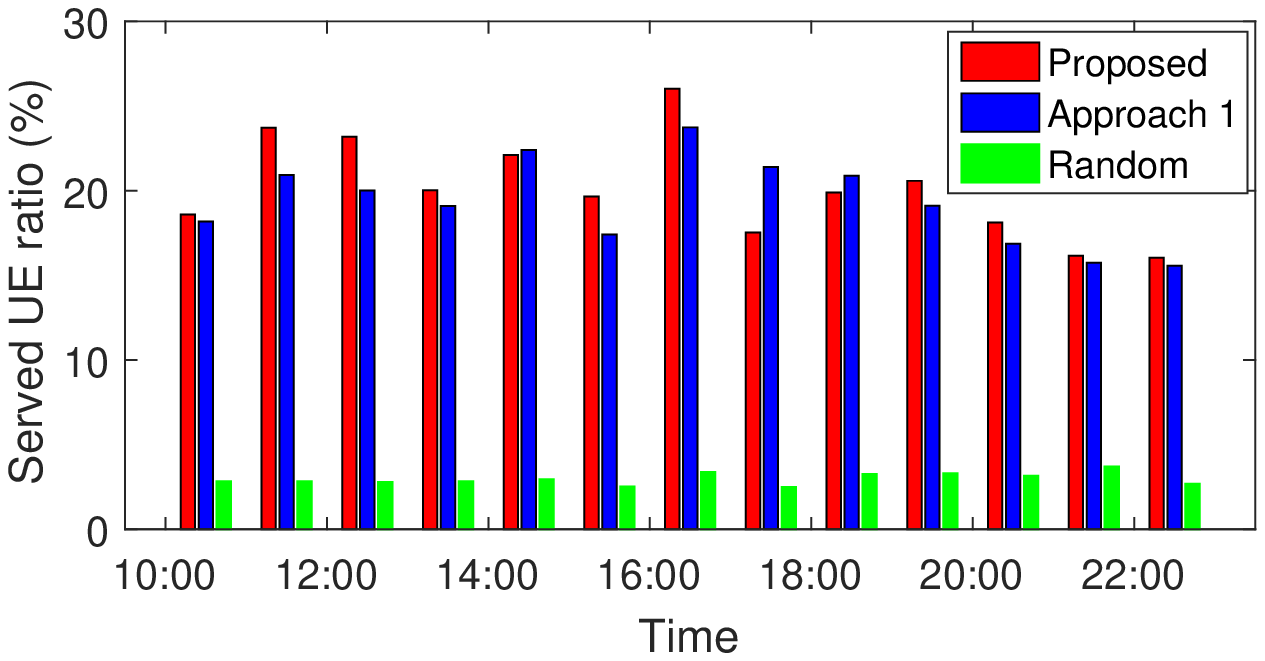}
        \caption{}
        \label{average_UE_weekdays_5}
    \end{subfigure}
    \begin{subfigure}[b]{0.47\textwidth}
        \includegraphics[width=\textwidth]{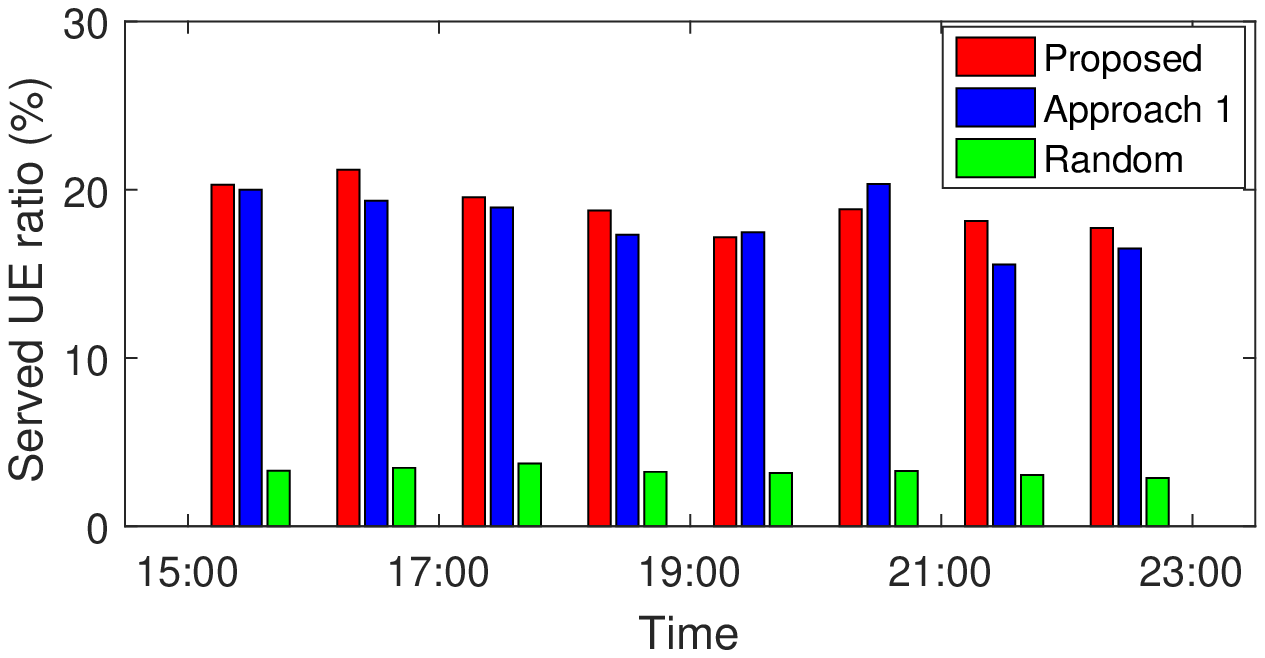}
        \caption{}
        \label{average_UE_weekends}
    \end{subfigure}
    \caption{(a) Average served UE ratio on weekdays. (b) Average served UE ratio on weekends.}
\end{figure}

%

\begin{figure}[t]
\begin{center}
{\includegraphics[width=0.45\textwidth]{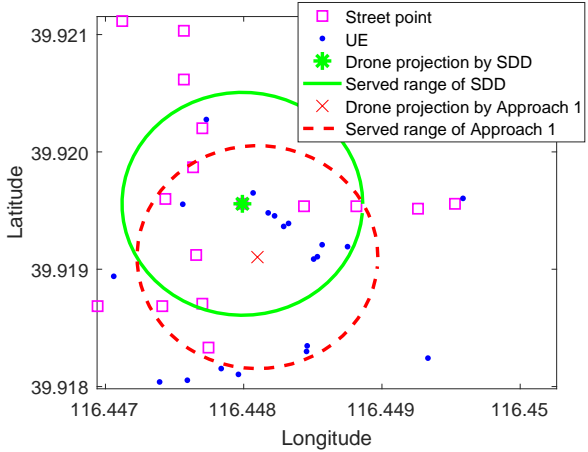}}
\caption{Illustrative example of 2D projections by the proposed approach and Approach 1.}\label{deployment_example}
\end{center}
\end{figure}

\subsubsection{Evaluation of kDD}

\begin{figure}[t]
    \centering
    \begin{subfigure}[b]{0.47\textwidth}
        \includegraphics[width=\textwidth]{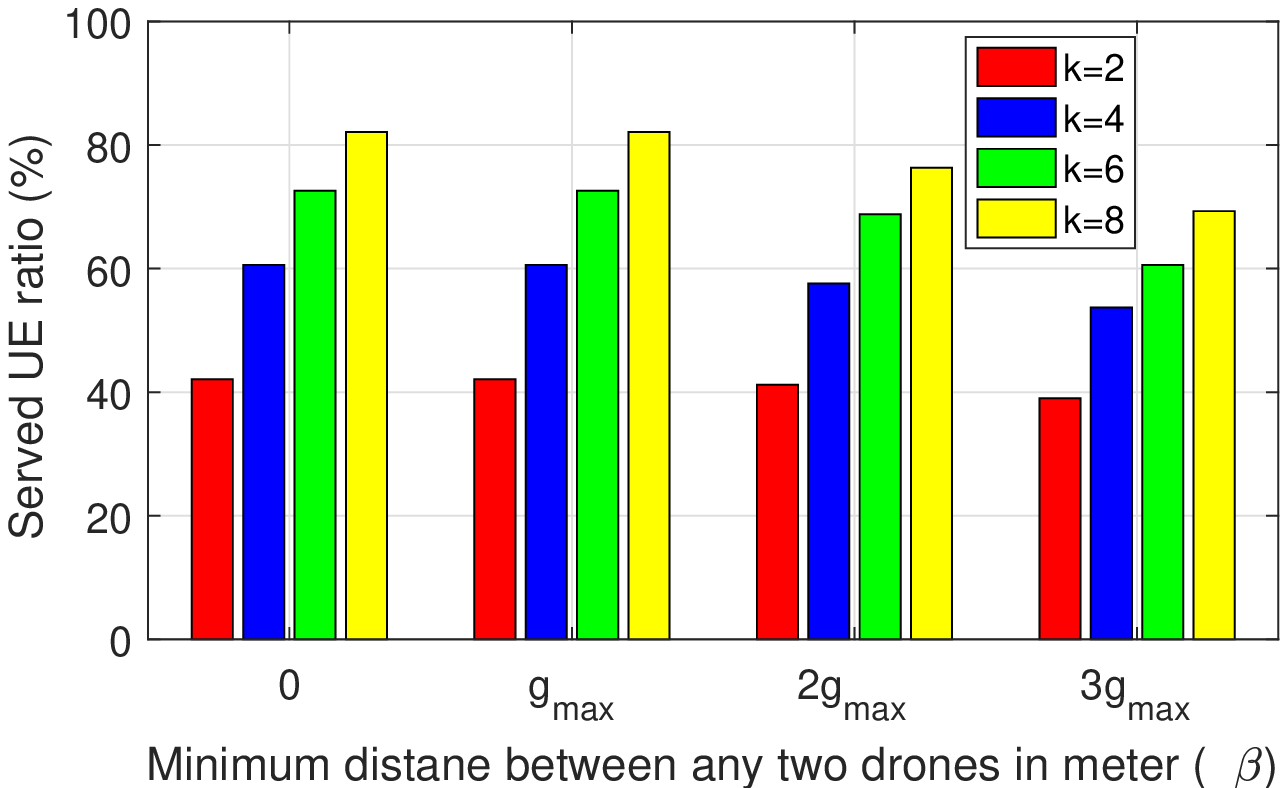}
        \caption{}
        \label{UE_ratio_beta_k_17}
    \end{subfigure}
    \begin{subfigure}[b]{0.47\textwidth}
        \includegraphics[width=\textwidth]{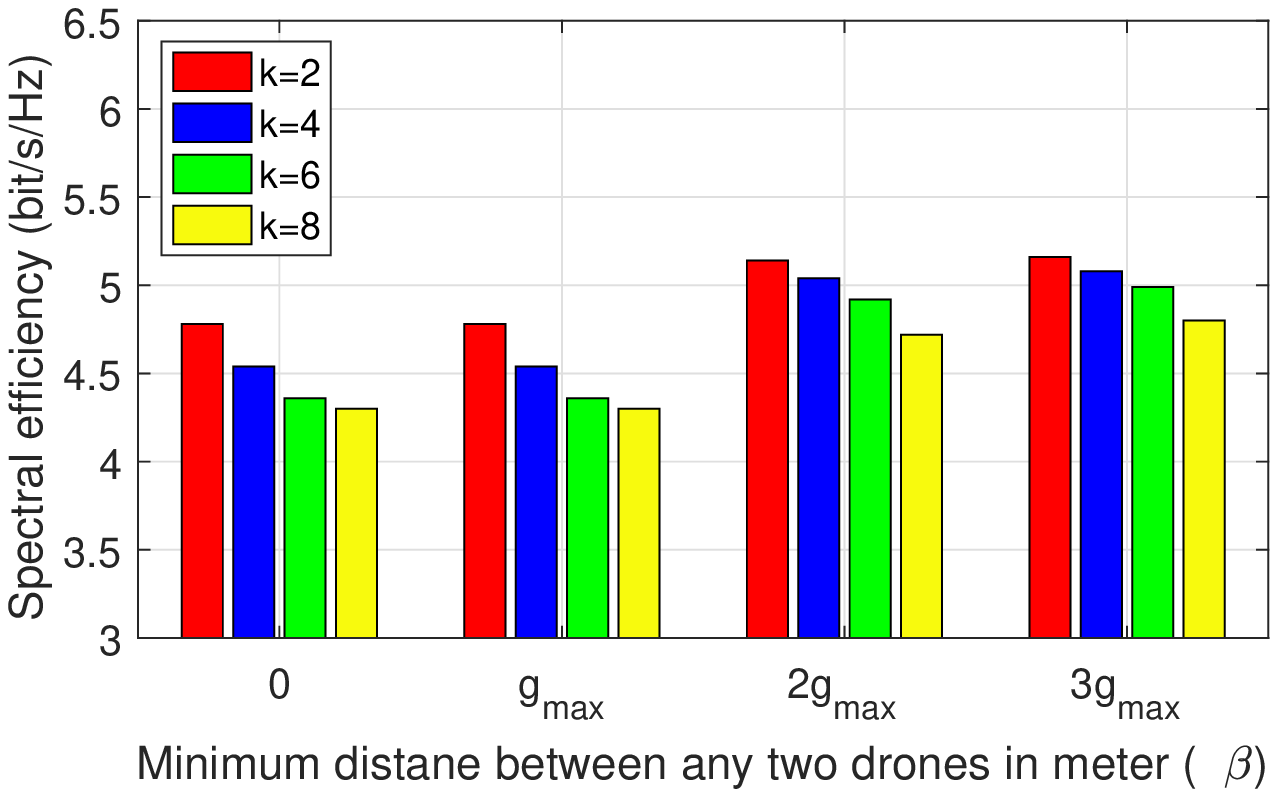}
        \caption{}
        \label{rate_beta_k_17}
    \end{subfigure}
    \caption{(a) Average served UE ratio on weekdays by multiple drones. The average number of UEs during the peak hour on weekdays is 350. (b) Average spectral efficiency on weekdays by multiple drones.}
\end{figure}

%

We demonstrate simulation results for deploying multiple drones. When multiple drones are used, as mentioned before, the interference should be taken into account to assess the user experience. Two main factors influence the interference intensity: the number of drones $k$, and the distance threshold between two drones $\beta$. Here, we vary $k$ from 2 to 8 and $\beta$ from $0$ to $3g_{max}$. Note that, when $\beta$ is 0, we are solving the max $k$-cover problem, i.e., without the consideration of inner drone distance constraint, because constraint (\ref{constraint4_3}) is always satisfied.

To have an insight of the impacts of these two factors, we focus on the performance on the peak hour, i.e., from 15:00 to 15:59, on weekdays.
We present the served UE ratio and spectral efficiency in Fig. \ref{UE_ratio_beta_k_17} and \ref{rate_beta_k_17} for the peak hour. Note, the results shown here are the average values of 27 weekdays. From these results we can see that: 
\begin{itemize}
\item With the increase of $k$, the served UE ratio increases; while the average spectral efficiency decreases. Because more UEs are covered by drones when $k$ is increased, and in the same time, the interference is also  raised.
\item With the increase of $\beta$, the served UE ratio slightly decreases; while the average spectral efficiency increases. This is because increasing $\beta$ means increasing the minimum distance between any two drones, which reduces the coverage performance; but in the meanwhile, it also reduces the strength of interference at a UE from other drones, which leads to larger spectral efficiency.
\item An interesting finding is that the results when $\beta$ takes $g_{max}$ are very similar to those when $\beta=0$ in terms of both served UE ratio and average spectral efficiency. In other words, the standard greedy algorithm for max $k$-cover problem inherently avoids placing two drones closely to each other.
\end{itemize}

\begin{figure}[t]
    \centering
    \begin{subfigure}[b]{0.47\textwidth}
        \includegraphics[width=\textwidth]{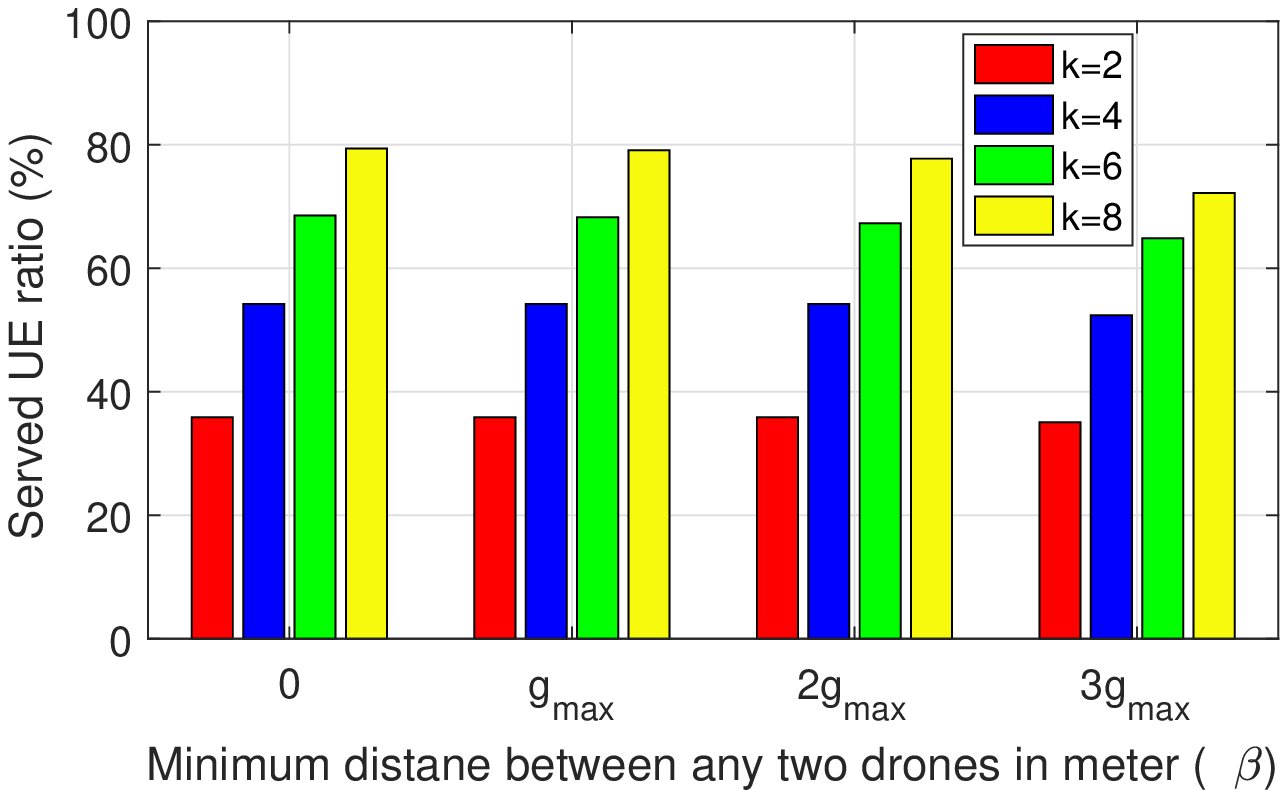}
        \caption{}
        \label{UE_ratio_beta_k_17_weekends}
    \end{subfigure}
    \begin{subfigure}[b]{0.47\textwidth}
        \includegraphics[width=\textwidth]{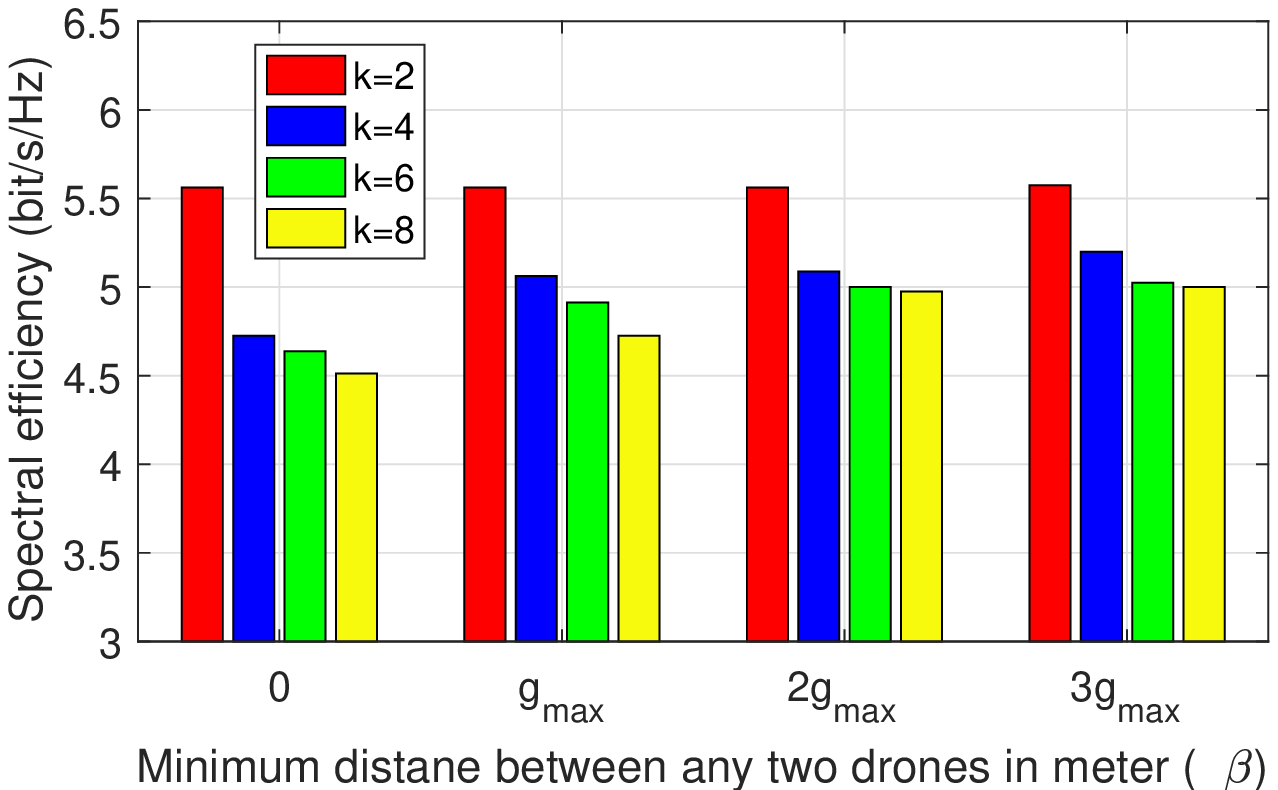}
        \caption{}
        \label{rate_beta_k_17_weekends}
    \end{subfigure}
    \caption{(a) Average served UE ratio on weekends by multiple drones. The average number of UEs during the peak hour on weekends is 210. (b) Average spectral efficiency on weekends by multiple drones.}
\end{figure}

%

The results for the same peak hour on the weekends are shown in Fig. \ref{UE_ratio_beta_k_17_weekends} and \ref{rate_beta_k_17_weekends}. The performance trends are similar to those of weekdays. Comparing the results of weekdays and weekends, we can find that under the same $\beta$ and $k$, the served UE ratio on weekdays is similar to weekends (see Fig. \ref{UE_ratio_beta_k_17} and \ref{UE_ratio_beta_k_17_weekends}). But the number of served UEs on weekdays is much larger than weekends due to the dense distribution of UEs on weekdays. The spectral efficiency on weekdays is about 10\% lower than that on weekends in average (see Fig. \ref{rate_beta_k_17} and \ref{rate_beta_k_17_weekends}). The reason behind this phenomenon is that the distribution of UEs on weekends is more sparse than weekdays. Under the same $\beta$ and $k$, the deployments of drones is also more sparse, which leads to lower interference, then higher spectral efficiency.

\subsubsection{Evaluation of EkDD}
In this part, we demonstrate the performance of the solution to the problem of EkDD. 

We assume that the operation power and the energy recharging rate are the same, i.e., $p=q$. From (\ref{n_k}) we have $n\geq \frac{1}{2}k$, i.e., the drones should be  divided into 2 groups. Consider the time slot of 1 hour in this paper and the state-of-the-art commercial drones, such as DJI\footnote{https://www.dji.com}, whose the flying time are around 30 minutes. So in this paper, we set $\lambda_S$, $\lambda_F$, and $\lambda_R$ to 45\%, 5\% and 50\% respectively. We consider that there are four utility poles located at the corners of graph, see Fig. \ref{snapshot} for an example. Here, we fix $\beta$ as $g_{max}$.

As discussed in Section \ref{P4}, the flying speed is the main factor impacting on the drone deployment. Here, we consider various practical flying speeds $4,5,6,7,8 m/s$, and present the corresponding performance in served UE ratio for the considered peak hour in Fig. \ref{P4_served_UE_ratio}. The results of kDP are also displayed for comparison, indicated by 'Inf'. For a fair comparison, in EkDD, $k=8$; while in kDD, $k=4$. Then, in both the cases, 4 drones can serve UEs simultaneously. From Fig. \ref{P4_served_UE_ratio} we can see that with the increase of $s$, the served UE ratio increases first and then remain at a steady level. Because a larger $s$ means wider operation radius for drones, and thus a weaker constraint on the positions of drones. When $s$ is larger than $6m/s$, the recharging constraint becomes invalid. Fig. \ref{snapshot} shows the drone projections for four cases of $s=4m/s$, $s=5m/s$, $s=6m/s$ and 'Inf', from which we can find that the constraint (\ref{constraint4_4}) pulls the drones' positions closer to the recharging positions compared to the non-constraint case of 'Inf'. When $s$ takes 7 or $8m/s$, the positions of drones are the same with the cases of $6m/s$ and 'Inf'. So we do not display them in Fig. \ref{snapshot}.

\begin{figure}[t]
    \centering
    \begin{subfigure}[b]{0.47\textwidth}
        \includegraphics[width=\textwidth]{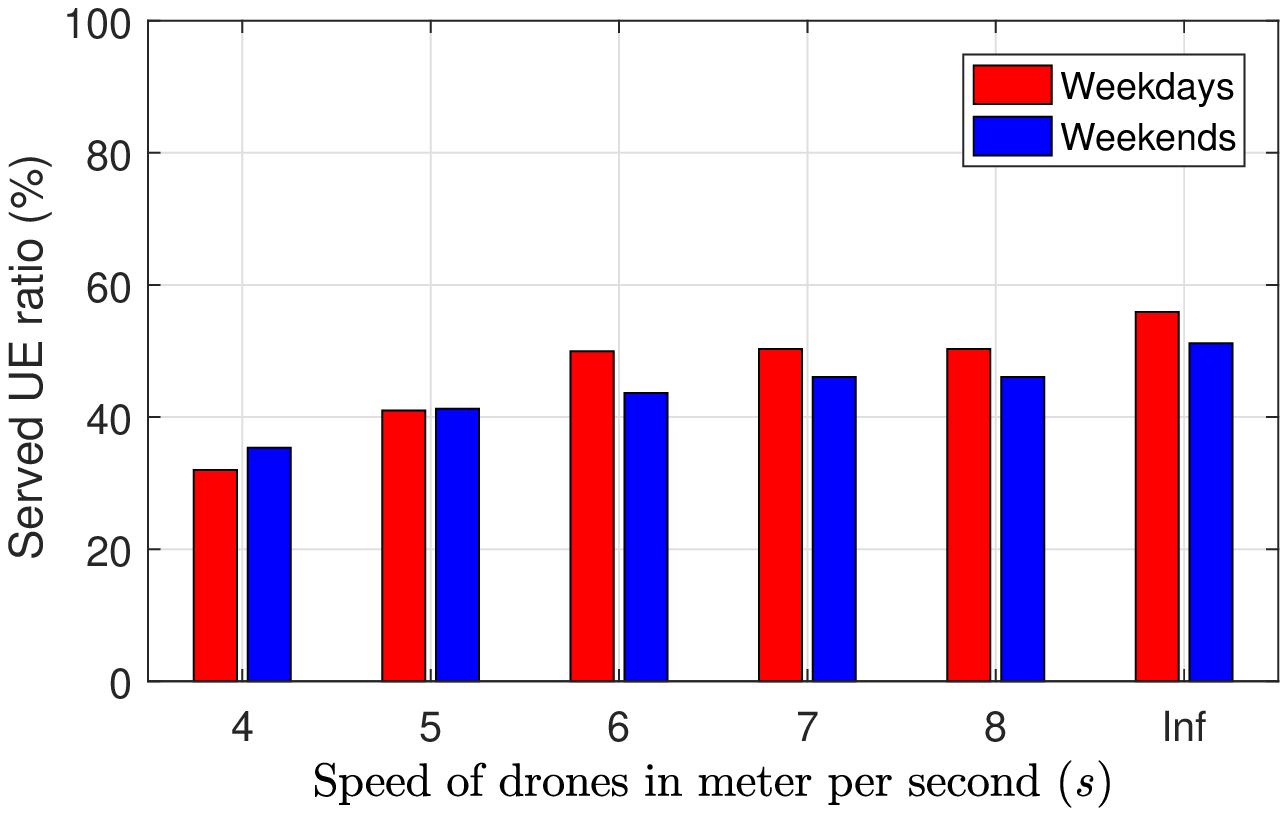}
        \caption{}
        \label{P4_served_UE_ratio}
    \end{subfigure}
    \begin{subfigure}[b]{0.47\textwidth}
        \includegraphics[width=\textwidth]{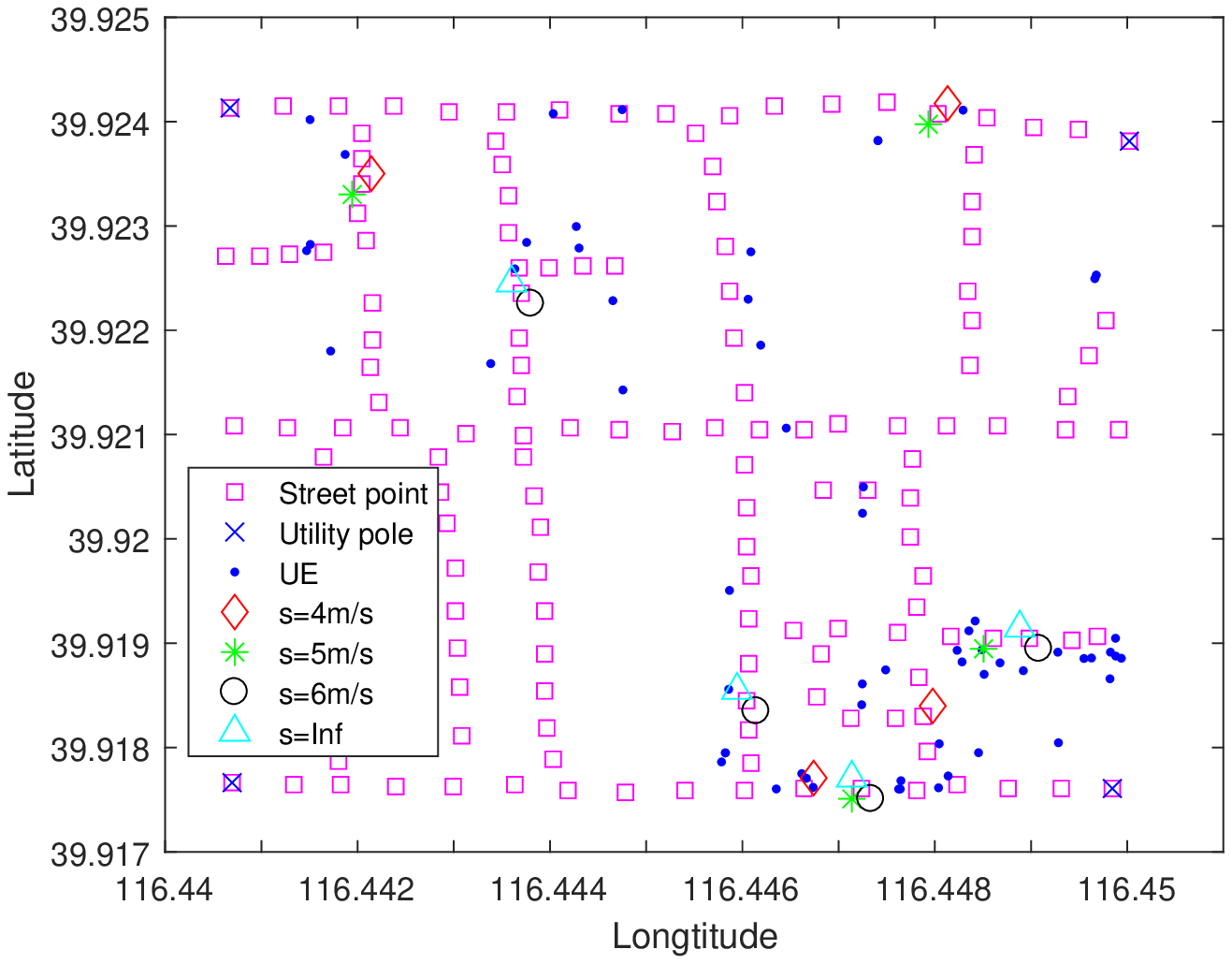}
        \caption{}
        \label{snapshot}
    \end{subfigure}
    \caption{(a) Served UE ratio against flying speed. The total numbers of UEs are 350 and 210 for the peak hour on weekdays and weekends respectively. (b) Drone projections in the cases of $4m/s$, $5m/s$, $6m/s$, and 'Inf' for the peak hour on a weekday.}
\end{figure}

%

In short, taking into account the energy issue, the performance reflects a more practical scenario. The drones with higher flying speed can achieve better performance due to the relaxed constraint on its position to recharge its battery.

\subsubsection{Evaluation of MinDD}
In this part, we show how many drones are needed to serve at least $\gamma$ percent of UEs. Here, we fix $\beta$ as $g_{max}$ and show the relationship between $\gamma$ and the minimum number of required drones, number of UEs served per drone, and network capacity. We vary $\gamma$ from 90\% to 98\%. The results are shown in Fig. \ref{evaluation4}.

\begin{figure}[t]
    \centering
    \begin{subfigure}[b]{0.47\textwidth}
        \includegraphics[width=\textwidth]{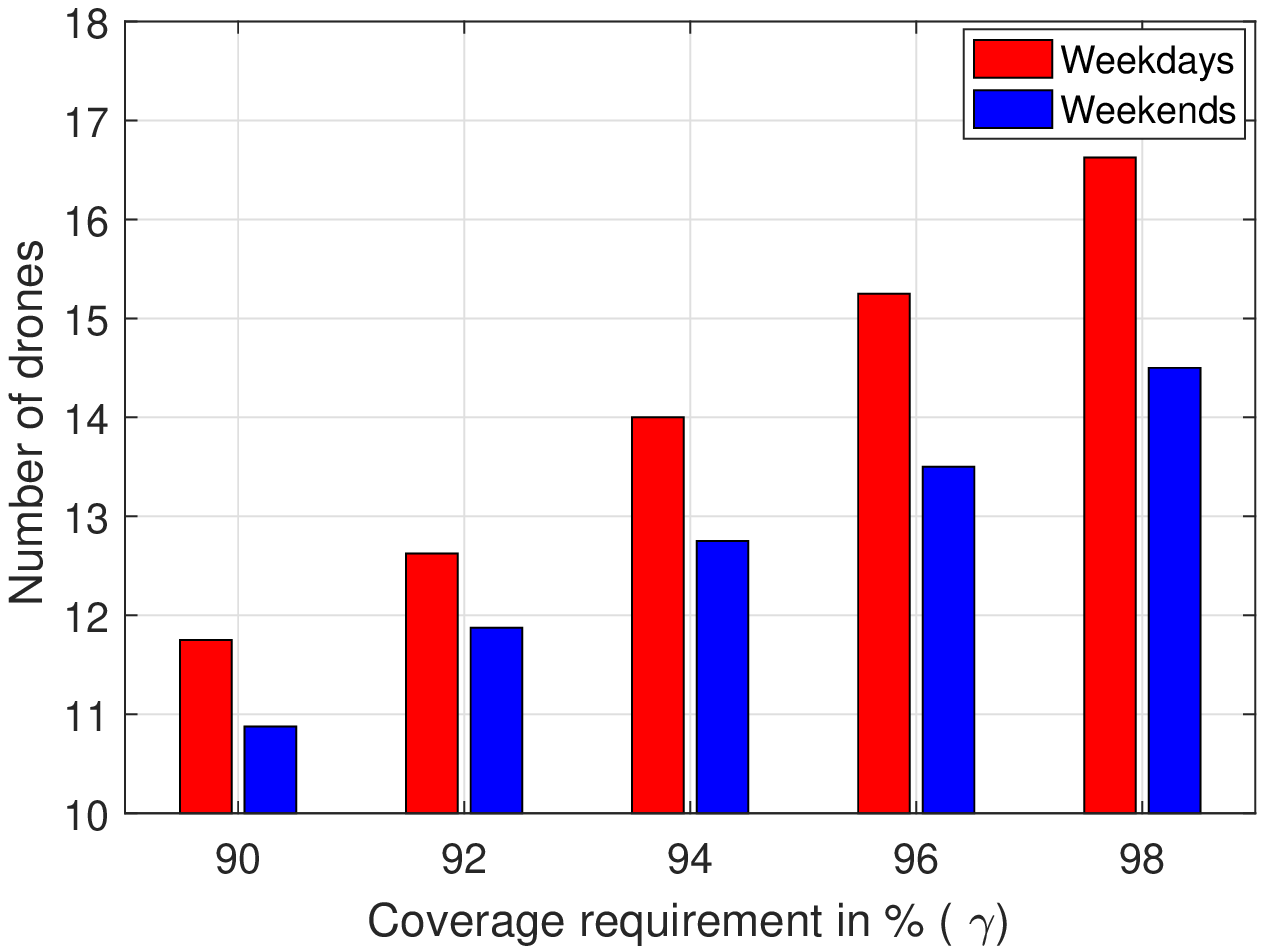}
        \caption{}
        \label{droneNo_gamma_17}
    \end{subfigure}
    \begin{subfigure}[b]{0.47\textwidth}
        \includegraphics[width=\textwidth]{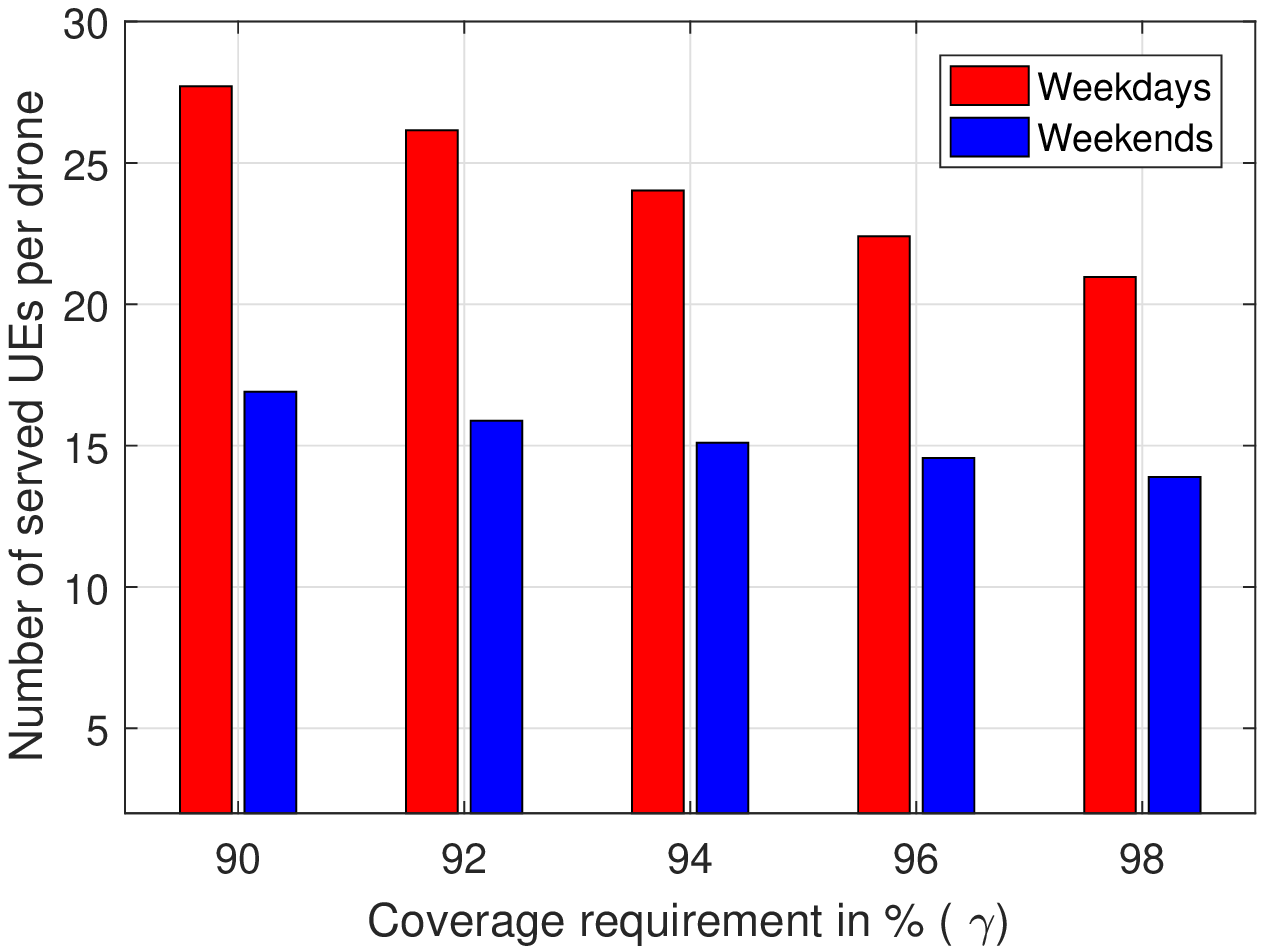}
        \caption{}
        \label{UE_per_drone_gamma_17}
    \end{subfigure}
    \begin{subfigure}[b]{0.47\textwidth}
        \includegraphics[width=\textwidth]{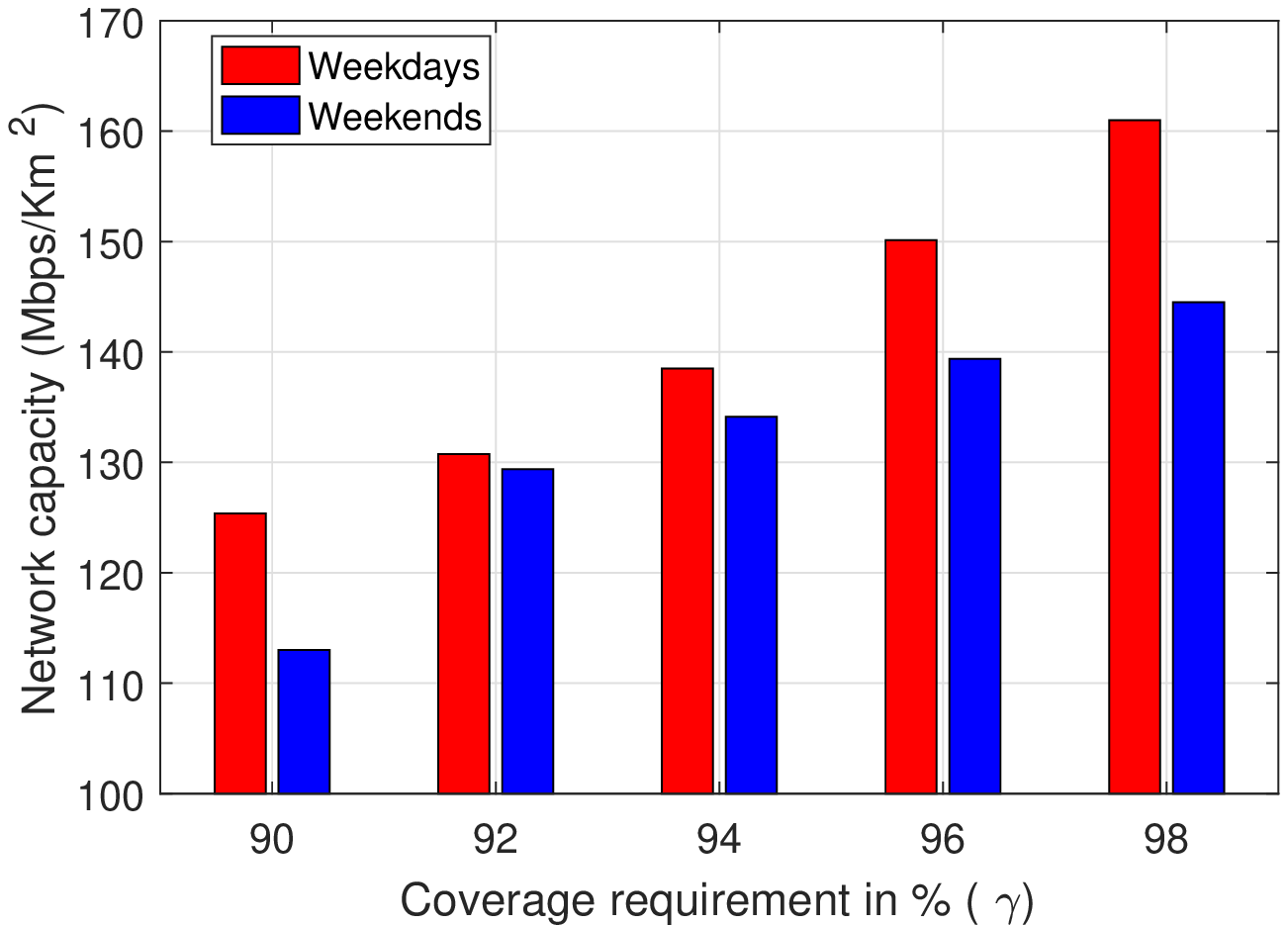}
        \caption{}
        \label{network_capacity_17}
    \end{subfigure}
    \caption{(a) Average minimum number of drones to serve $\gamma$ percent of UEs. (b) Average number of served UEs per drone against $\gamma$. (c) Network capacity against $\gamma$.}\label{evaluation4}
\end{figure}

%

\begin{itemize}
\item Fig. \ref{droneNo_gamma_17} displays the required drone numbers with various $\gamma$ on weekdays and weekends for the considered peak hour. We can see that the minimum number of drones increases with $\gamma$. 
Since the number of UEs on weekends are much smaller than weekdays, fewer drones are needed on weekends than on weekdays to achieve the same level of UE coverage. 
\item Fig. \ref{UE_per_drone_gamma_17} shows that, with the increase of $\gamma$, the average number of UEs served per drone decreases. In other words, the increase of served UE number is slow than the increase of drone number. We can also see that on weekdays the number of UEs served per drone is much higher than weekends, because UEs are sparsely distributed on weekends. 


\item Fig. \ref{network_capacity_17} presents the network capacity against $\gamma$. We can see that the network capacities on weekdays and weekends both increase with $\gamma$. Since the total number of UEs on weekends is much smaller than that on weekdays, hence the resource at drones is not fully used. So the network capacity on weekdays is larger than that on weekends.
\end{itemize}

From the above simulation results we can find that to serve more UEs, more drones are required and the network capacity can be improved. However, the efficiency of drone usage, i.e., the average number of served UEs per drone, decreases. Therefore, from the view of ISPs, there is a trade-off between the investment, i.e., the number of drones, and performance the gain, i.e., the network capacity.

\section{Limitations and Concluding Remarks}\label{conclusion}

In this paper, we adopt a UE density model based on the collected dataset Momo. However, the modern wireless traffic demand is quite dynamic. Thus, more effective models to predict the time-variant UE density are for further study. Besides, we have not touched the performance impact of drone altitude, which is another significant factor in practical drone deployment. Although it is a regulation issue, finding the optimal altitude within the allowed range may improve the number of served UEs. Further, the objective function of EkDD is a rough estimation of the served UE number, because some drones can fly back to the serving positions using less time than $\lambda_F$ of a time slot. The precise formulation requires to consider the serving positions in the objective function, which makes the problem more complex. Moreover, in terms of radio resource management (RRM), we simply allocate the available bandwidth evenly to UEs. Optimal allocation of radio resource is worth investigating to achieve a larger network capacity \cite{abdelnasser2016resource, jafari2015study}. 

From simple to complex, in this paper we discussed how to deploy a given number of drones to maximize the effectively served UE number; and how to determine the minimum number of drones to achieve a certain level of UE coverage. Different from existing work, the drone deployment is formulated based on a street graph model in this paper. The street graph, associated with UE density function (obtained from a realistic dataset Momo), is close to reality. Further, we proved that the problems are NP-hard, and then greedy algorithms were proposed to solve the problems. The effectiveness of our approaches was verified by extensive simulations on the Momo dataset. Since both the street graph model and the UE density function are quite realistic, the results provided in this paper provide valuable guidelines for realistic applications.


\bibliographystyle{IEEEtran}
\bibliography{mybibtex}
\end{document}